\documentclass[article]{llncs}
\usepackage[T1]{fontenc}
\usepackage{graphicx}
\usepackage{hyperref}
\usepackage{color}

\usepackage{amsmath, amssymb}
\usepackage{algorithm}
\usepackage{algpseudocode}
\usepackage{tabularx}
\usepackage{threeparttable}
\usepackage{orcidlink}
\begin{document}
\title{Deanonymizable Scoped Linkable Ring Signatures}
%
%
\author{Montassar Naghmouchi \orcidlink{0000-0003-3467-7514} and Maryline Laurent \orcidlink{0000-0002-7256-3721}}

\authorrunning{M. Naghmouchi and M. Laurent}
%
\institute{ SAMOVAR, Télécom SudParis, Institut Polytechnique de Paris}
\email{}
\maketitle              
\begin{abstract}
Ring signatures offer highly desirable privacy features like anonymity and ad-hoc group formation with high autonomy, but partially lack linkability and accountability features required for strict use-cases like consent management in healthcare. Existing signature schemes fail to natively integrate scoped linkability with decentralized accountability defined here on-demand unconditional deanonymization of signers without reliance on a trusted party (opener) or a separate cryptographic commitment. We introduce Deanonymizable Scoped Linkable Ring Signatures (DSLRS) to address this gap. DSLRS builds on previous linkable and accountable signature schemes and their cryptographic primitives to construct a new signature scheme that natively implement both privacy and trust features. Scoped linkability in DSLRS is achieved using "scopes" (context identifiers) and dynamic key images, effectively providing ring-independent scoped linkability and unlinkability across different scopes. DSLRS relies on k-of-N decentralized deanonymization network nodes to extract the signer's public key (identity) from ElGamal components built in the signature, achieving ring-independent decentralized accountability. The full DSLRS algorithms, formal property definitions and proofs, signature evaluation and experimental results from an implementation, and a blockchain-based instantiation and application of our scheme in consent management for clinical trials are provided. DSLRS is proved under the ECDLP and DDH hardness assumptions in the Random Oracle Model (ROM).

\vspace{1em}
\noindent \textbf{Keywords:} Ring Signatures, Linkability, Accountability, Decentralization, Consent Management, Blockchain
\end{abstract}
\section{Introduction}
Ring signatures enable ad-hoc group (ring) formation and allow signers to remain anonymous within a given ring. Intended to be used by whistle-blowers \cite{rivest}, the first ring signatures did not account for the need to link two signatures together, or for the need of accountability, where the identity of the signer needs to be revealed. The absence of a group manager made it even harder to implement such features. Linkable ring signatures provide linkability that extended the use-cases of ring signatures into domains that require detecting double signing, such as e-voting \cite{LRS}, preventing double-spending in blockchains \cite{monero}, etc. Later on, accountable variants \cite{Xu-acc} provided revocation of anonymity either on-demand \cite{Boot} or upon signer misbehavior \cite{TRS}. This presented further application possibilities for ring signatures, like online anonymous forums and anonymous auctions, or simply introducing accountability to previous use-cases - like holding dishonest whistle-blowers accountable.

While previous applications focus on isolated transactions and communications, healthcare applications require strict data governance. Consequently, consent management in healthcare and clinical trials requires high privacy standards combined with accountability. For medical and legal purposes, an anonymous consent giver can be traced back and deanonymized on demand if needed and under specific conditions like the need to re-identify the patient. For such use-case, a signature scheme is required to provide \textbf{signer indistinguishability} to provide anonymity for signers, \textbf{scoped linkability} in order to allow patients to be traced within the same scope (context, like a given research project), \textbf{cross-scope unlinkability} so they remain unlinkable in different contexts (for example across many research projects) and \textbf{accountability} which is \textbf{on-demand deanonymization} to satisfy re-identification needs. Note that the need for \textbf{cross-scope unlinkability} is highlighted in cases of \textbf{on-demand deanonymization} since it allows the reveal of the signer's identity only in certain contexts while preserving their global anonymity.
Furthermore, to avoid reliance on centralized authorities, \textbf{decentralized deanonymization} is desired.

Existing schemes fail to satisfy all criteria simultaneously. We therefore introduce Deanonymizable Scoped Linkable Ring Signatures (DSLRS).

\underline{Our contributions} are: (1) A new ring signature construction and algorithms that satisfies indistinguishability, scoped linkability, cross-scope unlinkability and embedded decentralized accountability; (2) Formal security definitions and proofs under ECDLP and DDH assumptions in the Random Oracle Model (ROM); (3) A consent management use-case and an effective blockchain instantiation of the proposed scheme.

The rest of the paper is structured as follows: Section \ref{background} provides a related work comparison, Section \ref{sec3} provides an overview and threat model in \ref{overview}, preliminaries and security assumptions in \ref{preliminaries}, scheme definition of DSLRS and the full DSLRS algorithms in Section \ref{scheme-def}. We also provide an evaluation and experimental results of the algorithms in \ref{eval}. In Section \ref{analysis} we provide formal definitions of our signature scheme properties using cryptographic game sets. Full reduction proofs are provided in appendix \ref{app:formal_proofs}. In section \ref{clinconnet2} we present our effective instantiation and use-case: blockchain-based consent management for clinical trials. Finally in Section \ref{concl} we conclude the paper.

\section{Related works}
\label{background}
Ring signatures (RS) originally provided unconditional anonymity for ad-hoc groups without requiring a group manager nor group coordination \cite{rivest}. LSAG \cite{LRS} addressed the inability to detect double-signing, extending the use of ring signatures to e-voting applications. LSAG linkability was global and depend on the ring, so the two signatures from the same signer were always linkable within the same ring. MLSAG \cite{MLSAG} advanced this scheme by using scopes to break the global linkability and the ring dependence, yielding signatures that are only linkable in a specific scope and that were used for anonymous transactions on blockchain \cite{monero}. These linkable RS variants do not provide any accountability for the signers. 

Fujisaki and Suzuki \cite{TRS} introduced Traceable RS (TRS). TRS is linkable and uses a 'tag' (referred to as scope in this paper) that allows the signer's identity to be revealed mathematically if they sign twice within the same tag and ring. This effectively provides a form of conditional accountability with scoped linkability. Xu and Yung \cite{Xu-acc} proposed Accountable RS (ARS) that enabled revocation of anonymity via designated trusted opener - or a threshold of openers - but lacked linkability and relies on a trusted dealer to generate and distribute keys. 

Bootle et al.\cite{Boot} introduced Short Accountable RS (SARS), in which the signer must encrypt their index within the ring using the public key of an opener and generate a non-interactive zero-knowledge (NIZK) proof to demonstrate it. The NIZK proof is a separate commitment from the signature, and SARS offered no linkability. 

RS in principle removed the centralization and the need for a group manager traditionally present in group signatures, so introducing a centralized accountability authority would be counter-productive. A clear literature gap in RS schemes is identified: no existing scheme natively integrates scoped linkability and on-demand decentralized accountability. Table \ref{tab:comparison} summarizes the limitations of existing RS schemes.

\begin{table}[ht]
\centering
\caption{Comparison of linkable ring signatures}
\label{tab:comparison}
\resizebox{\textwidth}{!}{%
\begin{threeparttable}
\begin{tabular}{|l |l |l |l |l |l|}
\hline
\textbf{Schemes} & \textbf{Linkability} & \textbf{Ring-Independence} & \textbf{Accountability} & \textbf{Opener} & \textbf{Separate Commitment} \\
\hline
RS \cite{rivest} & None & N/A\tnote{} & None & N/A & N/A \\ \hline
LSAG \cite{LRS} & Global & No & None & N/A & N/A \\ \hline
MLSAG \cite{MLSAG} & Scoped & Yes & None & N/A & N/A \\ \hline
ARS \cite{Xu-acc} & None & N/A & On-Demand & Designated / Dealer & N/A \\ \hline
TRS \cite{TRS} & Scoped & No & Conditional & Mathematical Extraction & No (Embedded) \\ \hline
SARS \cite{Boot} & None & N/A & On-Demand & Centralized Authority & Yes (External NIZK) \\ \hline
\textbf{DSLRS (Ours)} & \textbf{Scoped} & \textbf{Yes} & \textbf{On-Demand} & \textbf{Decentralized (Threshold)} & \textbf{No (Embedded)} \\ \hline
\end{tabular}
\begin{tablenotes}
\item[] N/A for 'Not Applicable'.
\end{tablenotes}
\end{threeparttable}%
}
\end{table}

\section{Defining DSLRS signature scheme}
\label{sec3}
This section formalizes the DSLRS scheme. It specifies the used cryptographic primitives, the threat model, the security assumptions and details the DSLRS algorithms.
A comprehensive notation reference table is provided in Table \ref{tab:notations}. \\

\subsection{Overview}
\label{overview}
The DSLRS scheme constructs a circular challenge loop denoted as $(ch_i)$ using Fiat-Shamir heuristic for each user of the ring $L$. Challenges for non-signers use the uniformly random responses $(x_i,z_i) \xleftarrow[]{R}\mathbb Zq$ while the signer computes their response $(x_s,z_s)$ using their secret key to mathematically close the loop. Verification requires a single challenge (provided in the signature, by default $ch_1$) and the responses $\{x_i,z_i\}_{i=1}^n$ to independently compute challenges and verify the signature.

Linkability uses a "scope" to separate signatures into different contexts, effectively providing scoped linkability and cross-scope unlinkability. This linkability feature is independent from the used ring, as it is based on the scope and the signer's public key.

Accountability is achieved by embedding an ElGamal-encrypted deanonymization tuple $(C_1,C_2)$ directly into the challenges. Moreover, the extraction of the signer's identity requires a threshold from a decentralized deanonymization network, and it extracts the public key instead of their index in the ring, making DSLRS ring-independent for accountability too. The deanonymization network runs a Distributed Key Generation (DKG) protocol \cite{Gennaro,Pederson} to generate necessary parameters to achieve decentralized accountability.

The security of this scheme and its properties are formally proven under \textbf{the threat model} described below and the \textbf{security assumptions} in Section \ref{preliminaries}.

\textbf{Threat model:}
We assume a Probabilistic Polynomial-Time (PPT) adversary $\mathcal{A}$ that has access to public parameters $PP$ (cf. Section \ref{scheme-def}). We assume $\mathcal{A}$ cannot alter the established $PP$. $\mathcal{A}$ is capable of making up to a polynomial number of queries $(q_H)$ to the random hash oracles $H$ and $H_p$ and $(q_{O_s})$ queries the signing oracle $O_s$.

We assume an honest majority in the decentralized deanonymization network composed of $N$ independent nodes. The maximum number of corruptible nodes is $c < (k-1)$ where $k$ is the network's secret key reconstruction threshold (cf. $NetKeyGen$ in Section \ref{scheme-def}).

\subsection{Preliminaries, notations and assumptions}
\label{preliminaries}
Let $p$ be a large prime number, $\mathbb F_p$ the finite field of order $p$, and $E$ an elliptic curve over $\mathbb F_p$ denoted as $E(\mathbb F_p)$. $E(\mathbb F_p)$ is defined by the simplified Weierstrass equation as $y^2 = x^3 + ax + b \pmod p$ where $a,b \in \mathbb F_p$ and $4a^3 + 27b^2 \neq 0 \pmod p$ \cite{SEC1}. 
Let $G$ be an element of a large prime order $q$ of $E(\mathbb F_p)$. $G$ generates a cyclic subgroup $\mathbb{G}$ of $E(\mathbb F_p)$, such that 
$\mathbb{G} = \{k \cdot G \mid k \in \mathbb{Z}_q\}$ and $q \cdot G = \mathcal{O}$, where $\mathcal{O}$ is the identity element of $\mathbb G$.

The \textbf{Elliptic Curve Discrete Logarithm Problem (ECDLP)} assumption holds relative to $\mathbb G$ if for any PPT adversary $\mathcal{A}$, the probability of finding $x \in \mathbb Zq$ that satisfies $Y = x \cdot G$ for $Y \in \mathbb G$ is negligible. We note the advantage of $\mathcal{A}$ in solving the ECDLP as
\\
$Adv^{ECDLP}_{\mathcal{A}} = Pr[\mathcal{A}(G, x\cdot G)=x | x \xleftarrow{R} \mathbb Zq] \approx 0$.

The \textbf{Decisional Diffie-Hellman (DDH)} assumption also holds relative to $\mathbb G$ if for any PPT adversary $\mathcal{A}$, the probability of distinguishing a Diffie-Hellman tuple $T_{DH} = (G, a\cdot G, b \cdot G, ab \cdot G)$ from a random tuple $T_{random} = (G, a \cdot G, b \cdot G, c \cdot G)$ for uniform scalars $a,b,c \xleftarrow{R} \mathbb Zq$ is negligible. We note the advantage of $\mathcal{A}$ in solving DDH as
\\
$Adv^{DDH}_{\mathcal{A}}=| Pr[a,b,c \xleftarrow{ }\mathbb Zq, \beta \xleftarrow{R} \{0,1\}: \mathcal{A}(G, a \cdot G, b \cdot G, ((1-\beta)ab + \beta c) \cdot G) = \beta] - \frac{1}{2}| \approx 0$. 

\textbf{The two hash functions $H$ and $H_p$} are modeled in the \textbf{Random Oracle Model} (ROM). 
\begin{table}
\footnotesize
\caption{Notations}
\label{tab:notations}
\centering
\begin{tabularx}{\textwidth}{|p{0.3\textwidth}|X|}
\hline
\textbf{Notations} & \textbf{Description} \\ \hline
$\mathbb G, G, q$ & The EC group $\mathbb G$ generated from $G$ and of prime order $q$. \\ \hline
$(S_i, P_i)$ & A key pair of user $i$, $i \in \{1, \dots, K\}$, where $S_i \in \mathbb Zq^*$ is user $i$'s secret key (scalar) and $P_i \in \mathbb L$ is the public key computed as $P_i = S_i \cdot G$. Let $s$ be the index of the signer and $(P_s,S_s)$ their key pair. \\ \hline
$\mathbb L = \{P_1, P_2, \dots, P_K\}$ & Global public key registry of $K$ users, $K \geq n_{min}$ where $n_{min}$ is a fixed number representing the smallest possible ring. \\ \hline
$L = \{P_1, P_2, \dots, P_n\}$ & A ring of $n$ public keys which are sampled by the signer from $\mathbb L$, $P_s \in L$, and $ n_{min} \leq n \leq K$. \\ \hline
$(S_{net},P_{net}), \{S_{net-j}\}_{j=1}^N$ & $(S_{net},P_{net})$ is the key pair of the deanonymization network of $N$ nodes. \\
$\{\omega_j\}_{j=1}^N, \{\lambda_j\}_{j=1}^N$ & Each participating node $j$ has: a public index $\omega_j$, a coefficient $\lambda_j$ and a Shamir secret key share $S_{net-j}$, where $\lambda_j = \prod_{i=1,i \neq j}^k \frac{\omega_i}{\omega_i - \omega_j} \pmod q$, $S_{net} = \sum_{j=1}^{k} \lambda_j \cdot S_{net-j} \pmod q$ and $P_{net} = S_{net}\cdot G$.\\ \hline
$SID \xleftarrow{} \mathbb Z_q$ & Identifier of a scope. \\ \hline
$I_{scope} \in \mathbb G$ & Key image of the key pair $(S_s, P_s)$ of the signer for a given scope $SID$, computed as $I_{scope} = S_s \cdot H_p(P_s || SID)$. \\ \hline
$m = \{0,1\}^*$ & The message to be signed. \\ \hline
$\sigma$ & The digital signature issued by the signer. \\ \hline
$a \xleftarrow{R} \mathbb Zq$ & $a$ is the uniform random draw of a value from $\mathbb Zq$ \\ \hline
$H: \{0,1\}^* \xrightarrow{} \mathbb Zq^*$ & A hash function that maps an arbitrary-length binary input string to a random scalar in $\mathbb Z_q^*$ (in the Random Oracle Model). \\ \hline
$H_p: \{0,1\}^* \xrightarrow{} \mathbb G$ & A hash-to-point function that maps an arbitrary-length binary input string to a random valid point on the subgroup $\mathbb G$ (in the Random Oracle Model). \\ \hline
$O_s(m,P_i,L,SID,PP) \xrightarrow{} \sigma$ & A \textbf{signing Oracle} that receives a message $m$, a public key $P_{i} \in L$, a ring $L$, a scope identifier $SID$, and $PP$, returns a valid signature $\sigma$ generated using the secret key $S_i$. 
\\ \hline
\end{tabularx}
\end{table}

\subsection{DSLRS scheme definition}
\label{scheme-def}
A DSLRS scheme over a PPT setup $Setup$ is a tuple of polynomial-time algorithms $(Setup, Sign, Verify, Link, Deanonymize)$ defined as follows:

\begin{itemize}
    \item $Setup(1^\lambda, N, K, P) \xrightarrow{} (PP = (\mathbb G, G, q, H, H_p, \{SID_i\}_{i=1}^P, P_{net}, \{\omega_j\}_{j=1}^N,\mathbb L=\{P_i\}_{i=1}^{K},n_{min}), \{S_i\}_{i=1}^{K}, \{S_{net-j}\}_{j=1}^{N})$: \\
    Given the security parameter $\lambda$, $N$ deanonymization network (opener) nodes, $K$ users, and $P$ contexts, the full system generates public parameters $PP$, users' public keys $\mathbb L=\{P_i\}_{i=1}^{K}$ and private keys $\{S_i\}_{i=1}^{K}$ and the network's public key $P_{net}$ and secret key shares $\{S_{net-j}\}_{j=1}^{N}$ and public indices $\{\omega_j\}_{j=1}^N$, as follows:
    \begin{enumerate}
        \item Generate base parameters $\{\mathbb G, G, q, H, H_p,\{SID_i\}_{i=1}^P, n_{min} \}$ where  authorized in the system.
        \item Execute [$NetKeyGen(PP) \xrightarrow{} (\{S_{net-j}\}_{j=1}^N,P_{net})$]. The DKG aggregates the network's public key $P_{net}$ using individual secret key shares $S_{net-j}$ from each node $j$ in the network. A node $j$ has a public index $\omega_j \in PP$ assigned by $NetKeyGen$. The secret key of the deanonymization network is reconstructed as $S_{net} = \sum_{j=1}^{k} \lambda_j \cdot S_{net-j} \pmod q$ using $k$ shares (k-of-N threshold) \cite{Shamir}, where $\lambda_j = \prod_{i=1,i \neq j}^k \frac{\omega_i}{\omega_i - \omega_j} \pmod q$. $P_{net} = S_{net}\cdot G$.
        \item Execute $K$ times [$Gen(PP) \xrightarrow{} (S_i,P_i)$] for $K$ users. $Gen(PP)$ issues a valid pair $(P_i,S_i)$; $P_i \in \mathbb G$ and $S_i \in \mathbb Z_q^*$ for user $i$, where $P_i = S_i \cdot G$. This initializes the global public key registry $\mathbb L = \{P_1, \cdots, P_K\}$ \footnote{This registry can be dynamically updated with a Decentralized Key Registration mechanism as presented in section \ref{clinconnet2}.} and associated secret keys $\{S_{j}\}_{j=1}^{K}$.
    \end{enumerate}
    \item $Sign(m,S_s,L,SID,PP) \xrightarrow{} \sigma = (I_{scope}, L, SID, P_{net}, C_1, C_2, ch_1, \{x_i, z_i\}_{i=1}^n)$: Given a message $m$, a ring $L= \{P_1, \cdots, P_n\}$ of $n \geq n_{min}$ public keys sampled by the signer from $\mathbb L \in PP$, the secret key $S_s$ of the signer associated with his public key $P_s \in L$, a scope $SID$ and $PP$, returns either a computed signature $\sigma$ or $\bot$ if the inputs are malformed or $P_s \notin L$. This algorithm computes commitments, responses $\{x_i, z_i\}_{i=1}^n$ and challenges $\{ch_i\}_{i=1}^n$ for all users in the ring $L$.
    \item $Verify(m,\sigma,PP) \xrightarrow{} 1/0$: Given a message $m$, a signature $\sigma$ and public parameters $PP$, returns 1 if the signature is valid and 0 otherwise. $Verify$ uses the input signature containing responses $\{x_i, z_i\}_{i=1}^n$ to independently rebuild all challenges $\{ch_i\}_{i=1}^n$. Since challenges are a circular loop that can be verified at any point, this algorithm by default verifies at the index $i=1$.
    \item $Link(\sigma_1, m_1,\sigma_2,m_2,PP) \xrightarrow{} 1/0$: Given two signatures $\sigma_1$ and $\sigma_2$, two messages $m_1$ and $m_2$, and $PP$, returns 1 if the signatures share the same signer in the same scope and 0 otherwise. After running $Verify$ on both signatures, this algorithm uses the key images $I_{scope}$ of both signatures to verify if the signatures were made by the same signer in this scope or not.
    \item $Deanonymize(\sigma, \{D_j\}_{j=1}^k, PP) \xrightarrow{} P_s$: Given a signature $\sigma$,  the calculated transient decryption shares $D_j = S_{net-j} \cdot C_1$ from k of $N$ deanonymization network's nodes and their public indices $\{\omega_j\}_{j=1}^k \in PP$, returns the public key of the signer $P_s$. $Deanonymize$ allows the extraction of $P_s$ from the committed $(C_1,C_2)$ in the signature $\sigma$ using $PP$ and the decryption shares of $k$ nodes.
\end{itemize}

\noindent
\begin{minipage}[t]{0.48\textwidth}
\begin{algorithm}[H]
\footnotesize
\caption{Sign}
\begin{algorithmic}[1]
\State \textbf{Inputs:} $m, S_s,L, SID,PP$
\If {$(P_s \not \in L) \lor (L \not \subset \mathbb L) \lor (SID \not \in \{SID\}_{i=1}^P) \lor (length(L) < n_{min})$}
    \State \Return $\bot$
\EndIf
\State $I_{scope} = S_s \cdot H_p(P_s || SID)$
\State $r, r_{dean}, r_z \xleftarrow{R} \mathbb Zq$
\State $C_1 = r_{dean} \cdot G$ ; $C_2 = P_s + r_{dean} \cdot P_{net}$
\State $L_s = r \cdot G$ ; $R_s = r \cdot H_p(P_s || SID)$
\State $A_s = r_z \cdot G$ ; $B_s = r_z \cdot P_{net}$
\State $ch_{(s \pmod n)+1} = H(m, L, I_{scope}, SID, C_1, C_2, L_s, R_s, A_s, B_s)$ 
\For{$j=1$ to $n-1, i \gets ((s+j-1) \pmod n) + 1)$}
    \State $x_i, z_i \xleftarrow{R} \mathbb Zq$
    \State $L_i = x_i \cdot G + ch_i \cdot P_i$
    \State $R_i = x_i \cdot H_p(P_i || SID) + ch_i \cdot I_{scope}$
    \State $A_i = z_i \cdot G - ch_i \cdot C_1$
    \State $B_i = z_i \cdot P_{net} - ch_i \cdot (C_2 - P_i)$
    \State $ch_{(i \pmod n)+1} = H(m, L, I_{scope}, SID, C_1, C_2, L_i, R_i, A_i, B_i)$
\EndFor
\State $x_s = r - ch_s \cdot S_s \pmod q$
\State $z_s = r_z + ch_s \cdot r_{dean} \pmod q$
\State $\sigma \gets (I_{scope}, L, SID, P_{net},C_1, C_2, ch_1,$\\
$\{x_i, z_i\}_{i=1}^n)$
\State \Return $\sigma$
\end{algorithmic}
\end{algorithm}
\end{minipage}
\hfill
\begin{minipage}[t]{0.48\textwidth}
\begin{algorithm}[H]
\footnotesize
\caption{Verify}
\begin{algorithmic}[1]
\State \textbf{Inputs:} $m$, $\sigma = (I_{scope}, L, SID, P_{net},C_1, C_2, ch_1, \{x_i, z_i\}_{i=1}^n), PP$
\If {$(L \not \subset \mathbb L) \lor (SID \not \in \{SID\}_{i=1}^P)) \lor (P_{net} \not \in PP) \lor (length(L) < n_{min} \in PP)$}
    \State \Return 0
\EndIf
\If {($(I_{scope} \notin \mathbb G) \lor (C_1 \notin \mathbb G) \lor (C_2 \notin \mathbb G)$}
    \State \Return $0$.
\EndIf
    \State $L_1' = x_1 \cdot G + ch_1 \cdot P_1$
    \State $R_1' = x_1 \cdot H_p(P_1 || SID) + ch_1 \cdot I_{scope}$
    \State $A_1' = z_1 \cdot G - ch_1 \cdot C_1$
    \State $B_1' = z_1 \cdot P_{net} - ch_1 \cdot (C_2 - P_1)$
    \State $ch'_2 = H(m, L, I_{scope}, SID, C_1, C_2, L_1', R_1', A_1', B_1')$

\For{$i = 2$ to $n$}
    \State $L_i' = x_i \cdot G + ch'_i \cdot P_i$
    \State $R_i' = x_i \cdot H_p(P_i || SID) + ch'_i \cdot I_{scope}$
    \State $A_i' = z_i \cdot G - ch'_i \cdot C_1$
    \State $B_i' = z_i \cdot P_{net} - ch'_i \cdot (C_2 - P_i)$
    \State $ch'_{(i \mod n) + 1} = H(m, L, I_{scope}, SID, C_1, C_2, L_i', R_i', A_i', B_i')$
\EndFor
\If{$ch_1' == ch_1$}
    \State \Return $1$
\EndIf
\State \Return $0$
\end{algorithmic}
\end{algorithm}
\end{minipage}
\hfill
\begin{minipage}[t]{0.48\textwidth}
\begin{algorithm}[H]
\footnotesize
\caption{Link}
\begin{algorithmic}[1]
\State \textbf{Inputs:} $\sigma_1$, $m_1$, $\sigma_2$, $m_2$, $PP$
\If {($SID_{\sigma_1} \neq SID_{\sigma_2}$)}
    \State \Return $0$
\EndIf
\If {($Verify(m_1,\sigma_1,PP) \land Verify(m_2,\sigma_2,PP)$)}
    \If{($I_{scope}^{\sigma_1} == I_{scope}^{\sigma_2}$)}
    \State \Return $1$
    \EndIf
\EndIf
\State \Return $0$
\end{algorithmic}
\end{algorithm}
\end{minipage}
\hfill
\begin{minipage}[t]{0.48\textwidth}
\begin{algorithm}[H]
\footnotesize
\caption{Deanonymize}
\begin{algorithmic}[1]
\State \textbf{Inputs:} $\sigma = (I_{scope}, L, SID, P_{net},C_1, C_2, ch_1, \{x_i, z_i\}_{i=1}^n)$, $\{D_j = S_{net-j}\cdot C_1\}_{j=1}^k$, $PP$
\For {$j = 1$ to $k$}
    \State $\lambda_j = \prod_{i=1,i \neq j}^k \frac{\omega_i}{\omega_i - \omega_j} \pmod q$
\EndFor
\State $V = \sum_{j=1}^{k} \lambda_j \cdot D_j$ 
\State $P_s = C_2 - V$ 
\State \Return $P_s$
\end{algorithmic}
\end{algorithm}
\end{minipage}

\subsection{DSLRS signature evaluation and performance}
\label{eval}
Let $|\mathbb G|$ be the size of a point in $\mathbb G$ and $|\mathbb Z_q|$ the size of a scalar.\\
The size of a DSLRS signature $\sigma = (I_{scope}, L, SID, P_{net},C_1, C_2, ch_1, \{x_i, z_i\}_{i=1}^n)$ can be evaluated as follows:
$|\sigma| = (n+4) \cdot |\mathbb G| + (2n+2) \cdot |\mathbb Z_q|$

The spatial evaluation is linear, $|\sigma| = \mathcal{O}(n)$. Assuming a 256-bit curve, the compressed EC point size is $|\mathbb G| = 33$ bytes and $|\mathbb Z_q| = 32$ bytes. The signature size is $|\sigma| = 97n + 196$ bytes, equivalent to approximately $0.97$ KB for a ring of size $n=8$, $1.74$ KB for $n=16$, and $3.3$ KB for $n=32$.
The algorithms $\{Sign,Verify,Link\}$ have an $\mathcal{O}(n)$ computational complexity since they all involve constructing or reconstructing challenges for all ring members. Compared to related works like RS \cite{rivest}, LSAG \cite{LRS}, MLAG \cite{MLSAG}, ARS \cite{Xu-acc} and TRS \cite{TRS}, the complexity of DSLRS aligns with the baseline of linear complexity $\mathcal{O}(n)$ for standard ring signatures. While SARS \cite{Boot} has a sub-linear complexity $\mathcal{O}(\log_{2} n)$, making DSLRS less efficient in both spatial and computational complexity if compared to SARS, SARS requires the verification of a separate NIZK proof for accountability, which may also have a sub-linear complexity. Nevertheless, in DSLRS, the $Deanonymize$ function is ring-independent and only depends on the number of the threshold nodes $k$, making its complexity an $\mathcal{O}(k)$, which is an $\mathcal{O}(1)$ relative to $n$. In terms of accountability, the computational complexity and self-sufficient nature of DSLRS make it more practical than SARS.
\\
A simple implementation of the 4 DSLRS algorithms in Python yields the following mean and median latencies and signature sizes. For this implementation, we chose secp256k1 (ECDSA) as the elliptic curve, with a deanonymization network of $N=12$ members and a threshhold of $k=8$.
We also use 3 ring sizes $n=8, \quad n=16$ and $n=32$. Table \ref{tab:results} summarizes the experimental results. These numbers were obtained on a Dell Precision 5480, with the 13th Gen Intel CPU core i9-13900Hx20 running 20 Cores at 1.39 Ghz. The machine has 32 GB of RAM. The operating system is Ubuntu 24.04.1 LTS of 64-bit.
For each algorithm, 100 iterations are run per ring size.

\begin{table}[h]
    \centering
    \begin{tabular}{|c | c | c | c|}
    \hline
    Ring size & $n=8$ & $n=16$ & $n=32$ \\ 
    Value & (Mean / Median) & (Mean / Median) & (Mean / Median) \\ \hline
    Sign & 270.6 ms/ 266.2 ms & 548.0 ms/ 543.4 ms & 1133.6 ms/ 1128.2 ms \\ \hline
    Verify & 269.8 ms/ 268.4 ms & 542.8 ms/ 542.5 ms & 1088.3 ms / 1085.0 ms \\ \hline
    Link & 542.0 ms/ 537.3 ms & 1085.3 ms/ 1077.7 ms & 2189.0 ms / 2170.2 ms \\ \hline
    Deanonymize & 15.6 ms/ 15.8 ms & 15.3 ms/ 15.2 ms & 15.8 ms/ 16 ms\\ \hline
    Signature size & 972 bytes & 1748 bytes & 3300 bytes\\ \hline
    \end{tabular}
    \caption{DSLRS implementation results showing mean and median latencies of the algorithms, as well as the signature size per ring size from 100 iterations per ring size.}
    \label{tab:results}
\end{table}
As Table \ref{tab:results} shows, the results are consistent with the theoretical analysis: $\{Sign,Verify,Link\}$ are $\mathcal{O}(n)$ and $Deanonymize$ is $\mathcal{O}(1)$.
Figure \ref{fig:bench} shows the distribution of results around the median value (the horizontal line in the middle), with the mean value represented by a white dot. Note that the quartiles are very close to the median line, which indicates consistent latency and algorithm execution.
\begin{figure}
    \centering
    \includegraphics[width=1\linewidth]{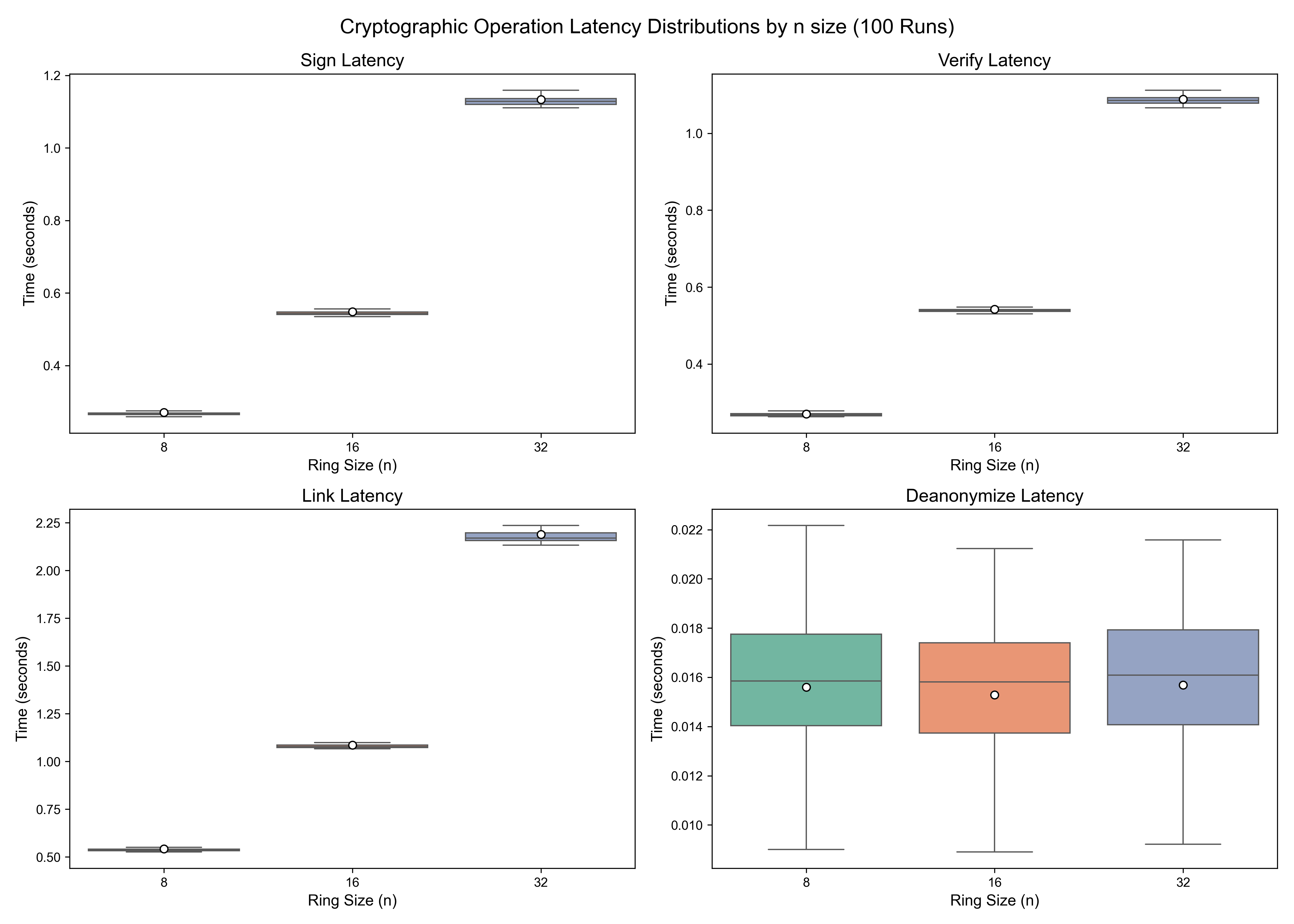}
    \caption{Distribution of the latency of DSLRS algorithms by ring size. The white dots represent the mean values, and the horizontal lines show the median value (in the middle) and the quartile range.}
    \label{fig:bench}
\end{figure}

\section{Security analysis and formal proofs}
\label{analysis}

\begin{theorem}
The DSLRS scheme is perfectly correct, satisfies signer indistinguishability, EUF-CMA unforgeability, scoped linkability, cross-scope unlinkability and accountability under the ECDLP and DDH hardness assumptions in the Random Oracle Model (ROM) (cf. Section \ref{preliminaries}) and the threat model presented in Section \ref{overview}.
\end{theorem}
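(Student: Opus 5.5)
The proof splits into one argument per property: correctness, signer indistinguishability, unforgeability, scoped linkability, cross-scope unlinkability and accountability. Each is argued against a game-based definition, and the proofs share two standard tools. The first is a \emph{ring-closing simulator}. Given $(m,L,SID,I_{scope},C_1,C_2)$, it picks $ch_1$ and all $\{x_i,z_i\}$ at random, recomputes every $(L_i,R_i,A_i,B_i)$ as in $Verify$, and programs $H$ so that each query returns the next challenge. It aborts only if a programmed point was already queried, which happens with probability at most $q_H(q_{O_s}+q_H)/q$. The second is the \emph{forking lemma}, used to extract witnesses from two accepting transcripts that share a prefix but have different challenges at one index.

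\textbf{Correctness.} This is a direct calculation. At the signer's index, $x_s=r-ch_sS_s$ gives $L_s'=x_sG+ch_sP_s=rG$. Likewise $R_s'=rH_p(P_s\|SID)$, since $I_{scope}=S_sH_p(P_s\|SID)$. Next, $z_s=r_z+ch_sr_{dean}$ gives $A_s'=r_zG$, and because $C_2-P_s=r_{dean}P_{net}$ we get $B_s'=r_zP_{net}$. So $Verify$ recomputes exactly the hashed values and the loop closes. Linking two signatures by the same signer in the same $SID$ is deterministic, because $I_{scope}$ depends only on $(S_s,P_s,SID)$. Deanonymization recovers $P_s$ because $\sum\lambda_jD_j=S_{net}C_1=r_{dean}P_{net}$.

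\textbf{Indistinguishability and cross-scope unlinkability.} Both go through a hybrid argument. First replace honest signatures with simulator outputs; this changes the view only by the negligible programming-failure term. What remains is the dependence on the signer through $I_{scope}$ and $(C_1,C_2)$.

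For $(C_1,C_2)$: as long as fewer than $k$ nodes are corrupted (the threat model has $c<k-1$), $(C_1,C_2)$ is an ElGamal encryption under $P_{net}$. Under DDH it can be replaced by an encryption of a random point, i.e.\ IND-CPA security of ElGamal.

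For $I_{scope}$: model $H_p(P_b\|SID)=\rho G$ with $\rho$ programmed by the reduction. Then $(G,P_b,\rho G,I_{scope})$ is a DDH tuple, and a DDH challenge can be embedded in it, so $I_{scope}$ can be replaced by a random point. After these replacements the view is independent of $b$.

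Cross-scope unlinkability is the same hybrid applied to two scopes $SID\neq SID'$. The key images $S_sH_p(P_s\|SID)$ and $S_sH_p(P_s\|SID')$ are jointly pseudorandom under DDH, by a random self-reduction over the programmed $H_p$ points.

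\textbf{Unforgeability, scoped linkability and accountability.} These are reduction arguments built on rewinding.

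For EUF-CMA, embed an ECDLP instance $Y$ as a random honest $P_{i^*}$. Answer signing queries with the simulator; this uses DDH-free programming, since the key image for $P_{i^*}$ is set to $\rho Y$ when $H_p(P_{i^*}\|\cdot)=\rho G$. Given a forgery, guess the index where the loop was closed and rewind at that $H$ query. Two transcripts with $ch\ne ch'$ yield $S=(x-x')/(ch'-ch)$ satisfying both $L$ and $R$. With probability $1/n$ this is $\log_G P_{i^*}$.

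Scoped linkability uses the same extraction. Two unlinked valid signatures in one scope by an adversary holding one key would yield two discrete-log witnesses for its key with distinct images, which is a contradiction. A signature whose $I_{scope}$ is not $S_iH_p(P_i\|SID)$ for the extracted $i$ is impossible by the $R$-equation.

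Accountability extracts, from the $A$ and $B$ equations, a $z$-witness $r_{dean}=(z-z')/(ch'-ch)$ with $C_1=r_{dean}G$ and $C_2-P_i=r_{dean}P_{net}$. Hence decryption must output $P_i$, the same ring member whose secret key the $L$ equation forces.

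\textbf{Main obstacle.} The hardest step is this last point: ensuring that the \emph{same} index $i$ is bound in all four equations. Each index uses a single challenge $ch_i$, but the extracted $x$-witness and $z$-witness come from one rewinding. So I would argue that one fork at index $i$ gives both witnesses simultaneously. The bound $c<k-1$ is then needed only to keep $S_{net}$ hidden in the indistinguishability proof. Uniqueness of decryption follows from Lagrange interpolation, whichever $k$ honest shares are used.
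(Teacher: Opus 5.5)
Your proposal is correct in outline and uses the paper's per-property decomposition. For EUF-CMA it is the paper's argument: embed the ECDLP instance in a registry key, fork at the critical $H$ query, and solve the $L$-equation. In both versions the guessing factor should be $1/K$ rather than $1/n$, since the embedded key is fixed before $\mathcal{A}$ chooses $L^*$.

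The genuine difference is how you obtain the binding properties.

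\emph{The paper's route.} It derives key-image integrity (Corollary 2, hence Lemma 4) and accountability (Lemma 6) algebraically. It substitutes the honest forms into the verification equations: $L_s=rG$ and $R_s=rH_p(P_s\|SID)$ with a common $r$, and $z_s=r_z+ch_s r_{dean}$ with the $r_{dean}$ of $C_1$. This is short and needs no rewinding. However, it presupposes that a cheating signer's commitments and responses have exactly the structure whose necessity is being claimed.

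\emph{Your route.} You use special soundness at the forked index. One challenge $ch_\pi$ multiplies all four commitments at $\pi$, and $(m,L,I_{scope},SID,C_1,C_2)$ is hashed in every query. So a single fork gives $I_{scope}=S_\pi H_p(P_\pi\|SID)$, $C_1=\rho G$ and $C_2-P_\pi=\rho P_{net}$ simultaneously. Decryption therefore returns the very member whose key the forger must know. This correctly resolves your ``main obstacle'', at the price of a forking-lemma loss and the usual bookkeeping about which query is critical.

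\emph{Anonymity.} The paper argues component-wise uniformity, invokes DDH only for $(C_1,C_2)$, and dismisses $I_{scope}$ via the oracle restriction. For cross-scope unlinkability it uses a single embedding, $H_p(P_u\|SID_1^*)=B$, $H_p(P_v\|SID_1^*)=x^{-1}Z$, $I_{scope}^{\sigma_2}=Z$, with a $2/(K(K-1))$ guessing loss. Your simulator-plus-hybrids makes the DDH step for $I_{scope}$ explicit. It also supplies what that embedding leaves implicit. With $H_p(P_v\|SID_1^*)=x^{-1}Z$, the point $Z$ is a valid image for $P_v$ in both branches, and $(C_1,C_2)$ must encrypt someone. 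Relating the $Z=abG$ branch to the $b=0$ game therefore needs exactly the extra DDH hybrids you include.

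Two small completions remain.
\begin{itemize}
\item You only treat the ``same signer, unlinked'' half of scoped linkability. The paper's Case B (distinct signers whose signatures link) needs one more remark. For extracted indices $\pi_1\neq\pi_2$, the equality $S_{\pi_1}H_p(P_{\pi_1}\|SID)=S_{\pi_2}H_p(P_{\pi_2}\|SID)$ has negligible probability, because for fixed distinct registered keys the two $H_p$ outputs are independent uniform points.
\item As in the paper, your accountability argument covers only honestly computed shares $D_j$. $Deanonymize$ does not check them, so a corrupted responder permitted by the threat model could shift its output.
\end{itemize}
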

The full formal proofs are provided in appendix \ref{app:formal_proofs}. Intuitive simple proofs are provided under each lemma.

\begin{lemma}
\textbf{Perfect correctness:} DSLRS scheme satisfies perfect correctness if: 

\[
\Pr \left[
\begin{array}{c}
    (PP, \{S_i\}_{i=1}^{K}, \{S_{net-j}\}_{j=1}^{N}) \leftarrow{} \text{Setup}(1^{\lambda}, N,K,P); \\
    \sigma \leftarrow{} Sign(m,S_s,L,SID,PP); \\
    \left(
    \begin{array}{c}
    (\forall m \in \{0,1\}^*) \land (\forall SID \in \{SID_i\}_{i=1}^P) \land (\forall L \subseteq \mathbb{L}) \land (\forall (S_s,P_s)) \land (P_s \in L) \\
     Verify(m,\sigma,PP) = 1 \land  Deanonymize(\sigma, \{D_j\}_{j=1}^k, PP)=P_s \\

     \end{array}
    \right)
\end{array}
\right] = 1
\]
\end{lemma}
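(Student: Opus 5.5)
We prove the two conjuncts separately, fixing an arbitrary honest execution: $Setup$ has run, $(S_s,P_s)$ is a registered key pair with $P_s\in L\subseteq\mathbb L$, $SID\in\{SID_i\}_{i=1}^P$, $|L|\ge n_{min}$, and $\sigma\leftarrow Sign(m,S_s,L,SID,PP)$. Under these conditions the input checks of $Sign$ never return $\bot$. The structural checks of $Verify$ then pass trivially: $L\subset\mathbb L$, $SID$ is authorized, $P_{net}\in PP$, and $I_{scope},C_1,C_2\in\mathbb G$ because they are group combinations of $G$, $P_{net}$, $P_s$ and $H_p(\cdot)\in\mathbb G$.

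\textbf{Verification.} We argue by induction along the ring loop. For every non-signer index $i\neq s$, $Verify$ recomputes $L_i',R_i',A_i',B_i'$ with exactly the same formulas and the same $(x_i,z_i)$ that $Sign$ used. Hence, whenever the incoming challenge agrees ($ch_i'=ch_i$), the outgoing hash also agrees ($ch'_{i+1}=ch_{i+1}$), since $H$ is a deterministic function of identical inputs. At the signer index we substitute $x_s=r-ch_sS_s$ and $z_s=r_z+ch_s r_{dean}$ and get:
\begin{align*}
L_s' &= rG - ch_sS_sG + ch_sP_s = rG,\\
R_s' &= rH_p(P_s\|SID) - ch_sS_sH_p(P_s\|SID) + ch_sI_{scope} = rH_p(P_s\|SID),\\
A_s' &= r_zG + ch_sr_{dean}G - ch_sC_1 = r_zG,\\
B_s' &= r_zP_{net} + ch_sr_{dean}P_{net} - ch_s(C_2-P_s) = r_zP_{net},
\end{align*}
where the last line uses $C_2-P_s=r_{dean}P_{net}$. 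These are exactly $L_s,R_s,A_s,B_s$, so the signer's hash reproduces the same next challenge. Starting from $ch_1'=ch_1$ (taken from $\sigma$), the equalities propagate around the full cycle of $n$ steps and close with the final check $ch_1'=ch_1$. The case $s=n$, and more generally the index bookkeeping $(i \bmod n)+1$ in both algorithms, needs care. The main obstacle is confirming that $Sign$'s loop over $j=1,\dots,n-1$ defines every $ch_i$, including $ch_1$ and $ch_s$, consistently with $Verify$'s traversal order. I would check this by showing that the set of indices visited is $\{s+1,\dots,s+n-1\}$ modulo $n$, so every challenge is produced exactly once before $x_s$ and $z_s$ are computed.

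\textbf{Deanonymization.} Take any $k$ honest decryption shares $D_j=S_{net-j}C_1$. Linearity gives
\[
V=\sum_j\lambda_jD_j=\Big(\sum_j\lambda_jS_{net-j}\Big)C_1=S_{net}C_1,
\]
where the last equality is Lagrange interpolation at $0$ from $NetKeyGen$. Then $S_{net}C_1=S_{net}r_{dean}G=r_{dean}P_{net}$, so
\[
C_2-V=P_s+r_{dean}P_{net}-r_{dean}P_{net}=P_s.
\]
Both conjuncts hold for every choice of the random coins $r,r_{dean},r_z,\{x_i,z_i\}$ and every value of the random oracles. No assumption is used, so the probability is exactly $1$.
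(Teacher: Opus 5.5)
Your proposal is correct and follows the paper's proof. It uses the same substitution $x_s=r-ch_sS_s$, $z_s=r_z+ch_sr_{dean}$ to show that $L_s',R_s',A_s',B_s'$ equal the signing-time commitments so the loop closes, and the same linearity argument $V=\sum_j\lambda_jD_j=S_{net}C_1=r_{dean}P_{net}$ to get $C_2-V=P_s$. You are somewhat more careful than the paper: you make the propagation $ch_i'=ch_i$ an explicit induction, state the needed side condition $|L|\ge n_{min}$, flag the index bookkeeping (which does check out, since Sign's loop visits exactly $s+1,\dots,s+n-1 \pmod{n}$ and so defines every $ch_i$), and rightly avoid the paper's irrelevant appeal to Lemma 6.
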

We prove this by demonstrating that an honest signer is able to produce a signature by closing the challenge loop computing $x_s,z_s$, and that the challenge loop can be reconstructed and closed during verification using a challenge (here $ch_1$ by default) and the provided commitments and responses. See \ref{app:proof_lemma_1} for formal proof.
\begin{lemma}
\textbf{Signer indistinguishability:} The DSLRS scheme satisfies signer indistinguishability if for any PPT adversary $\mathcal{A}$:\\ 
\[
Adv_{\mathcal{A}}^{indist} = \left| \Pr \left[
\begin{array}{c}
(PP, \{S_i\}_{i=1}^{K}, \{S_{net-j}\}_{j=1}^{N}) \leftarrow{} \text{Setup}(1^{\lambda}, N,K,P); b \xleftarrow{R} \{0,1\}; \\
    (m^*,SID^*,L^*,P_{j_0}^*,P_{j_1}^*) \leftarrow{} \mathcal{A}^{O_s}(PP); \\
    \sigma^* \leftarrow{} Sign(m^*,S_{j_b},L^*,SID^*,PP); \\
    b' \xleftarrow{} \mathcal{A}(\sigma^*) = b \\
    \left(
    \begin{array}{c}
    \land (L^* \subseteq \mathbb{L}) \land (\{P_{j_0}^*,P_{j_1}^*\} \subseteq L^* )\\
     \land ((P_{j_0}^*,SID^*),(P_{j_1}^*,SID^*) \text{ not queried to } O_s)
     \end{array}
     \right)
\end{array}
\right] - \frac{1}{2} \right| \approx 0
\]
\end{lemma}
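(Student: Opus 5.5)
The plan is a sequence of game hops that starts from the real indistinguishability experiment of the lemma and ends in a game where the challenge signature $\sigma^*$ is statistically independent of $b$. The only components of $\sigma^*$ that can carry information about the signer are three. The first is the challenge loop $(ch_1,\{x_i,z_i\}_{i=1}^n)$. The second is the ElGamal pair $(C_1,C_2)=(r_{dean}\cdot G, P_{j_b}+r_{dean}\cdot P_{net})$. The third is the key image $I_{scope}=S_{j_b}\cdot H_p(P_{j_b}\|SID^*)$. I will remove the dependence on $b$ from these components one at a time: the loop by programming the random oracle $H$ (ROM), the ElGamal pair by DDH, and the key image by DDH with $H_p$ programmed. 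Summing the hop losses then bounds $Adv^{indist}_{\mathcal A}$ by a negligible function.

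\textbf{Game $G_1$ (simulated loop).} The challenger produces $\sigma^*$, and every answer of $O_s$, without using any secret key $S_i$ or the randomness $r_{dean}$ beyond $C_1,C_2$. It picks a random starting index $t$, sets $ch_t\xleftarrow{R}\mathbb Z_q^*$, and draws every $x_i,z_i\xleftarrow{R}\mathbb Z_q$. It then computes $L_i,R_i,A_i,B_i$ from the verification equations, exactly as in the $Verify$ algorithm, and programs $H$ so that the loop closes at $t$. Each programmed point involves the freshly random group element $L_t$. Hence the probability that such a point was already queried is at most $(q_H+q_{O_s})/q$ per signature, and $|\Pr[G_0]-\Pr[G_1]|\le (q_{O_s}+1)(q_H+q_{O_s})/q$. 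In the real scheme the responses $x_s=r-ch_s S_s$ and $z_s=r_z+ch_s r_{dean}$ are uniform because $r$ and $r_z$ are uniform, so the transcript distribution is identical apart from these collisions. From $G_1$ on, $b$ enters $\sigma^*$ only through $C_2$ and $I_{scope}$.

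\textbf{Game $G_2$ (ElGamal hiding).} Replace $C_2$ by a uniform point of $\mathbb G$. Because the threat model allows at most $c<k-1$ corrupted nodes, Shamir secret sharing reveals nothing about $S_{net}$ to $\mathcal A$. A DDH tuple $(G,aG,bG,T)$ can therefore be embedded as $P_{net}=aG$ (the corrupted shares are simulated as random values, since $P_{net}$ does not constrain them) and $C_1=bG$, $C_2=P_{j_b}+T$. This gives $|\Pr[G_1]-\Pr[G_2]|\le Adv^{DDH}$. The simulation is consistent because in $G_1$ the oracle answers no longer need $S_{net}$ or $r_{dean}$: fresh $O_s$ ciphertexts are just $(r'G, P_i+r'P_{net})$.

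\textbf{Game $G_3$ (key image), the main obstacle.} Replace $I_{scope}$ by a uniform point. This is the delicate step because the reduction must embed the DDH instance into a user's public key while still answering $O_s$ queries for that very user under other scopes.
\begin{itemize}
\item \emph{Embedding.} The reduction guesses which registry index will be $j_b$, losing a factor $K$. It sets that user's public key $P_{j_b}=aG$, programs $H_p(P_{j_b}\|SID^*)=bG$, and sets $I_{scope}=T$.
\item \emph{Other hash-to-point queries.} For every other query $H_p(P\|SID)$ it answers $\tau\cdot G$ with $\tau$ known.
\item \emph{Oracle answers.} An $O_s$ query on $(P_{j_b},SID')$ with $SID'\neq SID^*$ is answered with key image $\tau\cdot aG=a\cdot(\tau G)$, and the rest of the signature is simulated as in $G_1$. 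The restriction that $(P_{j_b},SID^*)$ was never queried to $O_s$ is exactly what makes this possible.
\end{itemize}
This yields $|\Pr[G_2]-\Pr[G_3]|\le K\cdot Adv^{DDH}$. I will double-check that $P_{j_b}$ appearing inside other hash inputs causes no inconsistency; I expect none, since all such points are programmed with known discrete logarithms.

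\textbf{Conclusion.} In $G_3$ the signature $\sigma^*$ is independent of $b$, so $\Pr[G_3]=\tfrac12$. Collecting the bounds gives
\[
Adv^{indist}_{\mathcal A}\le \frac{(q_{O_s}+1)(q_H+q_{O_s})}{q}+(K+1)\,Adv^{DDH},
\]
which is negligible under DDH in the ROM.
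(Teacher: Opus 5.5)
Your hybrid structure is sound, but the reduction for $G_3$ is not well defined as written. It ``programs $H_p(P_{j_b}\|SID^*)=bG$'', yet $\mathcal A$ announces $SID^*$ only at the end of the query phase, while $H_p$ must be answered on the fly. Since $\mathbb L$ and the scope list $\{SID_i\}_{i=1}^P$ are public, $\mathcal A$ can query $H_p(P_\iota\|SID_j)$ for every $j$ before committing to anything, where $\iota$ is your guessed index. At each such query you must already choose between $b\cdot G$ and $\tau\cdot G$ with known $\tau$:
\begin{itemize}
\item If you answer $\tau\cdot G$ at the scope that later becomes $SID^*$, the honest image $\tau\cdot aG$ is computable. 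Setting $I_{scope}=T$ then fails to reproduce $G_2$ when $T=ab\cdot G$, and nothing is embedded.
\item If you answer $b\cdot G$ at a scope that $\mathcal A$ later sends to $O_s$ together with $P_\iota$, you cannot produce the key image $a\cdot bG$.
\end{itemize}
The repair is the device you already use for the user index. Also guess $\hat\jmath\in\{1,\dots,P\}$, program only $H_p(P_\iota\|SID_{\hat\jmath})=b\cdot G$, and output a random bit if $P^*_{j_b}\neq P_\iota$ or $SID^*\neq SID_{\hat\jmath}$. An $O_s$ query on $(P_\iota,SID_{\hat\jmath})$ certifies a wrong guess, so you may stop there. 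When both guesses are right, the restriction that $(P^*_{j_b},SID^*)$ is never queried to $O_s$ keeps the embedded point out of every oracle answer. Since $\mathcal A$'s view is independent of $(\iota,\hat\jmath)$, this hop then costs $KP\cdot Adv^{DDH}$ rather than $K\cdot Adv^{DDH}$. The loss is polynomial: $P$ is an input to the PPT $Setup$, and $Sign$ rejects scopes outside the list. Alternatively, guess the relevant $H_p$ query and lose a factor $q_{H_p}$.

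Once patched, your argument uses the same three-part decomposition as the paper's proof (loop responses and commitments, ElGamal pair, key image), but it is considerably more rigorous. The paper argues each part by assertion:
\begin{itemize}
\item It declares the commitments and responses uniform instead of simulating the joint transcript in the ROM.
\item It invokes DDH for $(C_1,C_2)$ without noting that a reduction ignorant of $r_{dean}$ cannot compute $z_s$. Your ordering, simulating the loop in $G_1$ before the ElGamal hop in $G_2$, is exactly what makes that embedding legitimate.
\item It dismisses $I_{scope}$ by appeal to the $O_s$ restriction alone.
\end{itemize}
Since $(G,P_{j_b},H_p(P_{j_b}\|SID^*),I_{scope})$ is a Diffie--Hellman tuple, the key image is only computationally hiding. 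Your $G_3$ therefore supplies the reduction the paper omits, with the oracle restriction playing the role you identify.
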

We prove this in \ref{app:proof_lemma_2} by proving that all of the signature components are either uniformly distributed and random scalars in $\mathbb Z_q$ or uniformly distributed points of $\mathbb G$. Moreover, we prove that under the DDH hardness assumption, the deanonymization components $C_1,C_2$ offer no advantage in guessing the signer's identity.
\begin{lemma}
\textbf{Unforgeability --  EUF-CMA:} The DSLRS scheme is existentially unforgeable under chosen-message attacks (EUF-CMA) if for any PPT adversary $\mathcal{A}$:
\[
Adv_{\mathcal{A}}^{EUF-CMA}= \Pr \left[
\begin{array}{c}
    PP, \{S_i\}_{i=1}^{K}, \{S_{net-j}\}_{j=1}^{N} \leftarrow{} \text{Setup}(1^{\lambda}, N,K,P); \\
    (m^*,SID^*,L^*,\sigma^*) \leftarrow{} \mathcal{A}^{O_s,H,H_p}(PP); \\
    \left(
    \begin{array}{c}
    (L^* \subseteq \mathbb L) \land (\forall P_i \in L^* , S_i \text{ is unkown to } \mathcal{A}) \\
    \land Verify(m^*,\sigma^*,PP) = 1 \\
    \land ((m^*,SID^*,L^*) \text{ not queried to } O_s)
    \end{array}
    \right)

\end{array}
\right] \approx 0
\]
\end{lemma}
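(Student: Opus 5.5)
We would prove the EUF-CMA lemma by reduction to ECDLP in the ROM. The reduction $\mathcal{B}$ receives an instance $Y = y\cdot G$. It runs $Setup$ honestly except for one thing: it guesses an index $t \in \{1,\dots,K\}$ and plants $P_t = Y$. All other user keys, and the network key shares $\{S_{net-j}\}$, are generated honestly, so $\mathcal{B}$ knows every secret except $y$.

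$\mathcal{B}$ answers the oracles as follows.
\begin{itemize}
\item It answers $H$ by lazy sampling.
\item It answers $H_p$ by programming $H_p(P_i\|SID) = \rho\cdot G$ for a fresh $\rho \xleftarrow{R} \mathbb Z_q$ that it records. This lets it compute a key image for $P_t$ as $I_{scope} = \rho\cdot Y$ without knowing $y$.
\item For signing queries on keys other than $P_t$, it runs $Sign$ honestly.
\item For signing queries on $P_t$, it simulates in the standard way for ring signatures. It picks $C_1, C_2$ honestly, which is possible because it knows $r_{dean}$. It picks a random starting index, a random $ch$ and random responses $(x_i,z_i)$ for all $i$. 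It computes the commitments $L_i,R_i,A_i,B_i$ through the verification equations, and programs $H$ at the $n$ points that close the loop.
\end{itemize}
Programming fails only if one of these $H$ inputs was already queried. Each input contains a fresh uniform point such as $L_i$, so the failure probability is at most $q_{O_s}(q_H+q_{O_s})/q$, which is negligible. The simulated signatures are then identically distributed to real ones, mirroring the argument in \ref{app:proof_lemma_2}.

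Now suppose $\mathcal{A}$ outputs a valid forgery $(m^*,SID^*,L^*,\sigma^*)$ on a fresh tuple. Because the challenge loop closes, some query to $H$ must have been made "last" in the loop, i.e. at a link whose output was fixed after its input. For that link, at position $\pi$, the input $(L_\pi,R_\pi,A_\pi,B_\pi)$ was fixed before $ch_{\pi+1}$ was known. We apply the General Forking Lemma (Bellare--Neven, or the Pointcheval--Stern rewinding used for LSAG). We rewind $\mathcal{A}$ to that $H$ query and answer it with a fresh random value. With non-negligible probability, roughly $\epsilon^2/q_H$, this yields two valid forgeries that share the commitments $L_{\pi'}$ at some index $i$ but have distinct challenges $ch_i \ne ch_i'$ and responses $x_i, x_i'$. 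Subtracting the two verification equations $L_i = x_i G + ch_i P_i = x_i' G + ch_i' P_i$ gives
\[
S_i = (x_i - x_i')(ch_i' - ch_i)^{-1} \bmod q .
\]
The same extraction applied to the $A_i,B_i$ equations gives $r_{dean}$ and shows $C_2 - P_i = r_{dean}P_{net}$. We use this only as a consistency check; the discrete log of $P_i$ is what matters. If $P_i = Y$, which happens with probability at least $1/K$ since the guess $t$ is independent of $\mathcal{A}$'s view, then $\mathcal{B}$ outputs $y$. This gives
\[
Adv^{EUF-CMA}_{\mathcal{A}} \le \sqrt{K q_H \cdot Adv^{ECDLP}_{\mathcal{B}}} + \mathrm{negl}.
\]

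We expect the main obstacle to be locating which index $i$ is "closed last" and showing that the forked transcript actually reuses the same $P_i$ and commitments. The approach will be to follow the LSAG argument. Consider the queries $H(m^*,L^*,I_{scope},SID^*,C_1,C_2,L_i,R_i,A_i,B_i)$ over all $i$; since the loop is cyclic, at least one of them is queried before its predecessor's output is fixed. We guess which $H$ query this is, losing a factor $q_H$. We also have to check that the freshness condition on $(m^*,SID^*,L^*)$ guarantees the forged loop did not come from programmed oracle answers. This holds because every programmed $H$ input contains a queried $(m,L,SID)$.

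A subtle point remains: the extracted witness must belong to the key at the forked position rather than to a key the adversary knows. This is handled by the lemma's condition that $\mathcal{A}$ knows no $S_i$ for $P_i\in L^*$. That condition is consistent with the simulation, because $\mathcal{B}$ never reveals the honest secrets.
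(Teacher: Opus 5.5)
Your proposal is correct and follows essentially the same route as the paper. You embed the ECDLP instance at a random registry index, fork $\mathcal{A}$ at the $H$ query that fixes a challenge whose commitment $L_i$ was already determined, and extract $S_i=(x_i-x_i')(ch_i'-ch_i)^{-1}$ from the two $L_i$ equations; you also supply details the paper leaves implicit, and your $1/K$ success factor is the correct unconditional version of the paper's $1/n$, which silently conditions on $P_v\in L^*$. Those details are the $O_s$ and key-image simulation for the planted key via programming $H_p(P_i\|SID)=\rho\cdot G$, the cyclic argument locating the fork point, and the use of freshness of $(m^*,SID^*,L^*)$ to exclude programmed hash points.
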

We use the forking lemma for ring signatures \cite{forking-lemma} to prove in \ref{app:proof_lemma_3} that if DSLRS is forgeable (EUF-CMA), a simulator $\mathcal{B}$ can utilize an adversary $\mathcal{A}$ with such forgery capacity to solve an instance of ECDLP, which contradicts the ECDLP hardness assumption.
\begin{corollary}
\textbf{Key image attack resistance:} The DSLRS unforgeability property holds even if $\mathcal{A}$ obtains a valid key image $I_{scope}$ generated by an honest signer $s$.
\end{corollary}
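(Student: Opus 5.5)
The plan is to reduce the corollary to the EUF-CMA reduction of the previous lemma. We show that giving $\mathcal{A}$ an honest key image $I_{scope}=S_s\cdot H_p(P_s\|SID)$ adds nothing the simulator $\mathcal{B}$ cannot already produce without $S_s$. Concretely, $\mathcal{B}$ receives an ECDLP instance $(G,Y)$ and plants $P_s=Y$ in $\mathbb L$. It programs $H_p$ so that $H_p(P_s\|SID)=t\cdot G$ for a fresh $t\xleftarrow{R}\mathbb Z_q$ that it records. It can then answer any request for a key image of $P_s$ in any scope as $I_{scope}=t\cdot Y=S_s\cdot(tG)$, which is exactly distributed as the honest key image because $tG$ is a uniform point of $\mathbb G$. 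Signing-oracle queries for $P_s$ are answered as in the proof of the EUF-CMA lemma, by the standard HVZK simulation. $\mathcal{B}$ samples all $x_i,z_i,ch_i$ itself, computes $L_i,R_i,A_i,B_i$ from the verification equations using the simulated $I_{scope}$ and honest $(C_1,C_2)$, and programs $H$ to close the loop. Programming fails only with probability at most $q_{O_s}(q_H+q_{O_s})/q$.

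Next we argue that the forking argument is unchanged. Suppose $\mathcal{A}$ outputs a forgery $(m^*,SID^*,L^*,\sigma^*)$, possibly reusing the leaked $I_{scope}$. We apply the forking lemma for ring signatures \cite{forking-lemma} at the hash query that closes the loop. This yields two accepting transcripts with equal commitments $L_\pi,R_\pi$ at some index $\pi$ but different challenges $ch_\pi\neq ch'_\pi$. From $x_\pi G+ch_\pi P_\pi=x'_\pi G+ch'_\pi P_\pi$ we extract $S_\pi=(x_\pi-x'_\pi)/(ch'_\pi-ch_\pi)$. With probability at least $1/n$ we have $\pi=s$, so $\mathcal{B}$ solves ECDLP for $Y$. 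This extraction uses only the $L$-equation, so it holds whatever value $I_{scope}$ takes; a leaked valid image gives no handle on the $G$-component. The $R$-equation in addition forces $I_{scope}=S_\pi H_p(P_\pi\|SID^*)$. So reusing an honest image is consistent only with knowledge of that same $S_\pi$.

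The step we expect to be the main obstacle is showing that the leaked image $S_s\cdot H_p(P_s\|SID)$, together with $P_s=S_s G$, does not itself help $\mathcal{A}$ compute $S_s$. The pair $(G,P_s,H,I_{scope})$ is a DDH-type tuple, but ECDLP-hardness is what is needed. This is handled by the embedding above: $\mathcal{B}$ can generate the image from $Y$ alone, so any advantage $\mathcal{A}$ gains is absorbed into the existing reduction. The resulting bound is $Adv^{EUF-CMA}_{\mathcal{A}}\le n\sqrt{q_H\cdot Adv^{ECDLP}_{\mathcal{B}}}+\mathrm{negl}$, matching the EUF-CMA lemma.
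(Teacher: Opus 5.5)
Your proposal is correct, but it takes a different route from the paper. The paper argues directly on the $R$-equation of the forgery. Matching $R'_s$ with an ``initial commitment'' $R_s = r\cdot H_p(P_s\|SID)$ forces $x_s = r - ch_s\cdot S_s$, and since $S_s$ cannot be recovered from $I_{scope}$ under ECDLP, no valid $x_s$ can be produced.

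You instead show that the leaked image can be simulated from the public key alone: answer every $H_p$ query with $t\cdot G$ for a recorded $t$ and output $t\cdot Y$. The image is then absorbed into the view the EUF-CMA reduction already supplies. You extract from the $L$-equation, which does not involve $I_{scope}$ at all.

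The paper's argument is shorter and more intuitive, but it is not a reduction. It presupposes that the forger's $R_s$ has the honest form with a nonce $r$. It also excludes only the particular strategy of extracting $S_s$ from $I_{scope}$, leaving other uses of the image unaddressed. Your route excludes every use of the image at once. It also makes explicit that the corollary is essentially contained in Lemma 3, since signing-oracle outputs already carry $I_{scope}$. Finally, it supplies the $H_p$-programming step that the proof of Lemma 3 hides behind ``valid-looking signature''. As a side benefit, your observation that the two forked transcripts share $R_\pi$ and $I_{scope}$ gives a genuine derivation of Corollary 2, rather than one that assumes an honest nonce.

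Two minor accounting remarks, neither affecting negligibility:
\begin{enumerate}
\item With $Y$ planted at a uniform index of a $K$-key registry, the unconditional guessing factor is $1/K$. Your $1/n$, like the paper's, holds only conditioned on $Y\in L^*$.
\item Your final bound $n\sqrt{q_H\cdot Adv^{ECDLP}_{\mathcal{B}}}$ differs by a factor $\sqrt{n}$ from the paper's $\sqrt{n\, q_H\, Adv^{ECDLP}_{\mathcal{B}}}$, so it does not literally match Lemma 3.
\end{enumerate}
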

This corollary is proven by demonstrating that a valid DSLRS signature challenge loop can only be completed if the signer knows the private key associated to the used key image.
\begin{corollary}
\textbf{Key image integrity:} The DSLRS scheme guarantees that if a signature $\sigma$ is valid, the embedded key image $I_{scope}$ is strictly formed as $S_s \cdot H_p(P_s||SID)$.
\end{corollary}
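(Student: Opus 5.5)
We prove Key image integrity with a rewinding (forking) argument in the Random Oracle Model, in the same spirit as the proof of EUF-CMA unforgeability (Lemma 3, via the forking lemma for ring signatures \cite{forking-lemma}). Write $h_i = H_p(P_i\|SID)$. For each $i$, the verification equations define
$L_i' = x_i G + ch_i P_i$ and $R_i' = x_i h_i + ch_i I_{scope}$.
Since the loop $ch_1 \to ch_2 \to \cdots \to ch_1$ must close, and $H$ is a random oracle, the adversary must have queried $H$ on every tuple $(m,L,I_{scope},SID,C_1,C_2,L_i',R_i',A_i',B_i')$. Except with probability about $q_H/q$, at least one query was made \emph{before} its output challenge $ch_{i+1}$ was fixed. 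Call the closing index $\pi$: the query producing $ch_{\pi+1}$ was issued before the query whose output is $ch_\pi$. At that moment $(L_\pi', R_\pi')$ is already committed, but $ch_\pi$ is a fresh random value.

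The main step forks at the query that outputs $ch_\pi$. We rerun the adversary with the same random tape and the same oracle answers up to that query, then answer it with a fresh random value. By the forking lemma, with non-negligible probability we obtain two valid signatures. Both share the same prefix $(I_{scope}, L_\pi', R_\pi')$ but have challenges $ch_\pi \neq \tilde{ch}_\pi$ and responses $x_\pi, \tilde x_\pi$. Subtracting the two copies of each equation gives
$(x_\pi-\tilde x_\pi)G = (\tilde{ch}_\pi - ch_\pi)P_\pi$ and $(x_\pi-\tilde x_\pi)h_\pi = (\tilde{ch}_\pi - ch_\pi)I_{scope}$.
Set $S_\pi = (x_\pi-\tilde x_\pi)/(\tilde{ch}_\pi-ch_\pi)$. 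Then $P_\pi = S_\pi G$ and $I_{scope} = S_\pi h_\pi$. So the key image equals $S_\pi\cdot H_p(P_\pi\|SID)$ for the ring member $P_\pi$ whose loop was closed, and the same witness $S_\pi$ underlies both $P_\pi$ and $I_{scope}$. This is exactly the claimed form with $s=\pi$.

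It remains to rule out the case where a valid $\sigma$ carries an $I_{scope}$ not of this form. By the extraction above, a produced signature with $I_{scope}\neq S_i h_i$ for every $P_i \in L$ contradicts soundness of the two-equation (Chaum--Pedersen-style) relation, except with negligible probability. We use that $h_i$ is a random-oracle point whose discrete log relative to $G$ is unknown, and that the $h_i$ are independent across $i$ and $SID$. This matters because the obstacle would be a degenerate relation such as $h_\pi = tG$ with $t$ known, which would let $I_{scope}$ be forged from $P_\pi$ alone. We exclude it by programming $H_p(P\|SID) = t\cdot Y$ for a random $t$ and an embedded ECDLP instance $Y$, which ties the argument to the ECDLP assumption.

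The step I expect to be most delicate is the probability accounting. It has two parts. First, we must show a closing index $\pi$ with a "late" challenge exists except with probability $O(q_H^2/q)$. Second, we must apply the generalized forking lemma over $q_H$ possible fork points, accounting for the signing oracle simulated by programming $H$, with abort probability $O(q_{O_s}q_H/q)$. The resulting extraction probability is roughly $\epsilon^2/q_H$ minus negligible terms. That suffices to conclude that any valid $\sigma$ has $I_{scope}=S_s\cdot H_p(P_s\|SID)$ except with negligible probability. Since $\sigma$ is valid, this holds with overwhelming probability, which is the sense in which "strictly formed" is meant.
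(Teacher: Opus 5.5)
Your argument is correct in substance, but it takes a different route from the paper.

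\textbf{The two routes.} The paper argues in one shot. It writes $I'_{scope}=A\cdot G$ and $H_p(P_s\|SID^*)=B\cdot G$, takes the signer-position commitments to be $L_s=r\cdot G$ and $R_s=r\cdot B\cdot G$ with a \emph{common} nonce $r$, and equates the two resulting expressions for $x_s$ to get $A=S_s\cdot B$. You instead fork $H$ at the query returning the late challenge $ch_\pi$. You then apply special soundness of the equality-of-discrete-logs relation to two accepting transcripts sharing $(I_{scope},L'_\pi,R'_\pi)$.

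\textbf{What each buys.} The paper's route is shorter. But for an adversarial signature, the shared-nonce shape of $(L_s,R_s)$ is essentially what has to be proved. Its conclusion also reads as deterministic, which cannot hold literally: valid signatures with malformed key images do exist and are only infeasible to find with polynomially many $H$-queries. Your route derives that relation instead of assuming it. It also gives ``the signer'' of an adversarial signature a concrete meaning (the closing index $\pi$) and carries the correct probabilistic quantifier.

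\textbf{Two remarks on your write-up.} First, the fork index should encode three things: the query carrying $(L'_\pi,R'_\pi)$, the query returning $ch_\pi$, and the ring position $\pi$, since the hash input does not contain $i$. This costs an extra factor of about $n\,q_H$ in the reduction.

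Second, your final paragraph embedding an ECDLP instance in $H_p$ is unnecessary, and its motivation is off.
\begin{itemize}
\item If $I_{scope}\neq\log_G(P_\pi)\cdot h_\pi$, then for a fixed pair $(L'_\pi,R'_\pi)$ at most one challenge admits a response satisfying both equations, whatever $\log_G h_\pi$ is. So your extraction contradiction is unconditional.
\item Knowing $t=\log_G h_\pi$ only yields $I_{scope}=t\cdot P_\pi=S_\pi\cdot h_\pi$, which is correctly formed. That is an anonymity concern, not an integrity one.
\end{itemize}
The same observation gives a forking-free proof. The late challenge $ch_\pi$ must hit one of at most $n\,q_H$ values determined by earlier queries, so a malformed key image survives with probability at most about $n\,q_H^2/q$.
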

Corollary 2 is proven by showing that a malformed key image cannot be used to produce a valid DSLRS signature.
\\
Full proofs for both corollaries are provided in \ref{app:proof_lemma_3}.
\begin{lemma}
\textbf{Scoped-linkability:} The DSLRS scheme satisfies scoped-linkability if for any PPT adversary $\mathcal{A}$:
\[
Adv_{\mathcal{A}}^{SL} = \Pr \left[
\begin{array}{c}
   PP, \{S_i\}_{i=1}^{K}, \{S_{net-j}\}_{j=1}^{N} \leftarrow{} \text{Setup}(1^{\lambda}, N,K,P); \\
   (\sigma_1^*,\sigma_2^*,m_1^*,m_2^*,SID^*,L_1^*,L_2^*,P_{s1}^*,P_{s2}^*) \leftarrow{} \mathcal{A}^{O_s,H,H_p}(PP);\\
   \left(
   \begin{array}{c}
   (P_{s1}^* \in L_1^*) \land (P_{s2}^* \in L_2^*) \\
   \land ((m_1^*, P_{s1}^*, L_1^*, SID^*), (m_2^*, P_{s2}^*,L_2^*, SID^*) \text{ not queried to } O_s) \\
   \land (Verify(m_1^*,\sigma_1^*,PP) = 1) \land (Verify(m_2^*,\sigma_2^*,PP) = 1) \\
   \land \left(
   \begin{array}{c}
   (P_{s1}^* = P_{s2}^*) \land Link(\sigma_1^*,m_1^*,\sigma_2^*,m_2^*,PP) = 0) \\
   \lor \\
   (P_{s1}^* \neq P_{s2}^*) \land Link(\sigma_1^*,m_1^*,\sigma_2^*,m_2^*,PP) = 1) \\ 
   \end{array}
   \right) \\
   \end{array}
   \right)

\end{array}
\right] \approx 0
\]
\end{lemma}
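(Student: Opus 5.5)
The plan is to split the event in Lemma 4 into its two disjuncts and bound each with the key image integrity of Corollary 2. The first disjunct is a false negative: $P_{s1}^*=P_{s2}^*$ but $Link=0$. The second is a false positive: $P_{s1}^*\neq P_{s2}^*$ but $Link=1$.

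The first step is to fix what ``the signer'' of a valid forged signature means. For each $\sigma_b^*$ with $Verify(m_b^*,\sigma_b^*,PP)=1$, I invoke Corollary 2. It gives, except with negligible probability, $I_{scope}^{\sigma_b} = S_{s_b}\cdot H_p(P_{s_b}\|SID^*)$ for some $P_{s_b}\in L_b^*$ whose secret key the challenge loop was closed with.

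The justification is the rewinding argument already used for Lemma 3 and Corollary 2. Fork at the query that closes the loop, obtaining two accepting transcripts with $ch_s\neq ch_s'$. The $L$ and $R$ equations then give
\[
x_s - x_s' = (ch_s'-ch_s)\, S, \qquad I_{scope} = S\cdot H_p(P_s\|SID^*) \text{ with } P_s = S\cdot G .
\]
This is the standard LSAG-style extraction. The $A,B$ equations likewise force $C_2 - P_s = r_{dean}\cdot P_{net}$, so the index $s$ of the closing member is the one encrypted in $(C_1,C_2)$.

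I then identify $P_{s_b}^*$ in the game with this extracted key. I would argue that the event is only meaningful when the adversary's claimed signer is the one bound by the signature. This is the delicate interpretive point, since the game lets $\mathcal{A}$ name $P_{s1}^*,P_{s2}^*$ freely. I would resolve it by using the deanonymization components: by perfect correctness (Lemma 1) and the $A,B$ relations, $Deanonymize$ returns exactly the extracted key.

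With the signer pinned down, the false-negative case is immediate. The same scope $SID^*$ and the same key $P_s$ give identical key images $S_s\cdot H_p(P_s\|SID^*)$. Because $H_p$ is a deterministic function, $Link$ returns 1 once both verifications pass. Hence this disjunct has probability bounded by the failure probability of Corollary 2, which is negligible.

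For the false-positive case, distinct keys $P_{s1}\neq P_{s2}$ with equal images would require
\[
S_{s1}\cdot H_p(P_{s1}\|SID^*) = S_{s2}\cdot H_p(P_{s2}\|SID^*).
\]
In the ROM, the two $H_p$ outputs are independent uniform points, say $H_p(P_{s1}\|SID^*)=\alpha G$ and $H_p(P_{s2}\|SID^*)=\beta G$ with $\alpha,\beta$ unknown. The simulator programs them with known discrete logs. Equality then forces $S_{s1}\alpha = S_{s2}\beta$, which holds with probability at most about $q_H^2/q$ over the oracle's choice, because both secret keys are fixed by $P_{s1},P_{s2}$ before the oracle answers.

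Alternatively, I would embed an ECDLP instance into one $H_p$ output and extract it from such a collision. A union bound over the two cases then yields $Adv^{SL}_{\mathcal A}\le 2\epsilon_{fork}+q_H^2/q$, which is negligible.

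The main obstacle I expect is the extraction step. I must justify that the forking lemma of \cite{forking-lemma} applies simultaneously to two output signatures. This can be handled by forking on each separately, at a polynomial loss. I also need to handle an adversary who is itself a legitimate key holder, so that the binding, and not unforgeability, carries the argument.
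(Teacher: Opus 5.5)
Your proposal is correct and takes the same route as the paper. You split into the false-negative case (same signer, $Link=0$) and the false-positive case (distinct signers, $Link=1$), settle both through key image integrity (Corollary 2), and reduce the second case to a collision $S_{s1}\cdot H_p(P_{s1}^*\|SID^*)=S_{s2}\cdot H_p(P_{s2}^*\|SID^*)$ in the ROM.

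Your two refinements improve on the paper rather than change the argument. First, the paper silently treats the adversary-named $P_{s1}^*,P_{s2}^*$ as the actual signers. Read literally, the game is won trivially by relabeling two oracle signatures with other ring members, so your binding of the signer through extraction and the $Deanonymize$ output is genuinely needed. Second, your statistical $q_H^2/q$ bound for the collision is sharper and more accurate than the paper's loose appeal to ECDLP.
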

Lemma 4 is proven in \ref{app:proof_lemma_4} using corollary 2. The corollary contradicts the first case, while under the second case it shows that breaking the linkability requires solving ECDLP, which is assumed hard.
\begin{lemma}
\textbf{Cross-scope unlinkability:} The DSLRS scheme provides cross-scope unlinkability if for any PPT adversary $\mathcal{A}$:

\[
Adv_{\mathcal{A}}^{CSU} = \left| \Pr \left[
\begin{array}{c}
   PP, \{S_i\}_{i=1}^{K}, \{S_{net-j}\}_{j=1}^{N} \leftarrow{} \text{Setup}(1^{\lambda}, N,K,P); b \xleftarrow{R} \{0,1\}; \\
    (m^*,SID_0^*,SID_1^*,L^*,P_{j_0}^*,P_{j_1}^*) \leftarrow{} \mathcal{A}^{O_s,H,H_p}(PP); \\
    \sigma_1 \leftarrow{} Sign(m^*,S_{j_0},L^*,SID_0^*,PP); \\
    \sigma_2 \leftarrow{} Sign(m^*,S_{j_b},L^*,SID_1^*,PP); \\
    b' \xleftarrow{} \mathcal{A}(\sigma_1,\sigma_2) = b \\
    \left(
    \begin{array}{c}
    (SID_0^* \neq SID_1^*)\\
    \land (L^* \subseteq \mathbb{L}) \land (\{P_{j_0}^*,P_{j_1}^*\} \subseteq \mathbb L ) \\ \land ((P_{j_0}^*,SID_1^*),(P_{j_1}^*,SID_1^*) \text{ not queried to } O_s)
    \end{array}
    \right)
\end{array}
\right] - \frac{1}{2} \right| \approx 0
\]
\end{lemma}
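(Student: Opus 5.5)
The proof is a sequence of game hops in the ROM, starting from the real cross-scope unlinkability experiment of Lemma 5 and ending in a game where $\sigma_2$ carries no information about $b$. We use $\mathcal{B}$ both as the simulator and as the DDH distinguisher.

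The first hop programs the oracles. $\mathcal{B}$ answers $H_p$ queries on $P_i\|SID$ by choosing $t_{i,SID}\xleftarrow{R}\mathbb Z_q$ and returning $t_{i,SID}\cdot G$. Since $t_{i,SID}$ is uniform, this is a perfect simulation, and it lets $\mathcal{B}$ compute key images as $t_{i,SID}\cdot P_i$ without secret keys. Signing-oracle queries are answered with the standard ring-signature simulation also used for Lemma 2:
\begin{itemize}
\item pick $ch_s,x_s,z_s$ uniformly;
\item back-program $H$ at the closing point of the loop;
\item abort only if that point was already queried, which happens with probability at most $q_H(q_{O_s}+1)/q$.
\end{itemize}
After this hop, $\sigma_1$ and $\sigma_2$ are produced by the same simulator. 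The responses $\{x_i,z_i\}$ and $ch_1$ are uniform and independent of the signer index, exactly as in Lemma 2. The only components that can still depend on $b$ are $I^{(2)}_{scope}=S_{j_b}\cdot H_p(P_{j_b}\|SID_1^*)$ and the ElGamal pair $(C_1^{(2)},C_2^{(2)})$.

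The second hop replaces the ElGamal pair $(C_1^{(2)},C_2^{(2)})$ by a uniform pair. We embed a DDH tuple $(G,P_{net},C_1,T)$ by setting $C_2=P_{j_b}+T$. This is reducible because $\mathcal{A}$ never learns $S_{net}$: at most $c<k-1$ nodes are corrupted, so the corrupted shares are independent of $S_{net}$ by Shamir secrecy. The change is bounded by $Adv^{DDH}$, and the same argument already appears in the proof of Lemma 2.

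The third hop, which is the crux, replaces $I^{(2)}_{scope}$ by a uniformly random point. The tuple
\[(G,\;P_{j_b},\;H_p(P_{j_b}\|SID_1^*),\;I^{(2)}_{scope})\]
is a DDH tuple with exponents $S_{j_b}$ and $t_{j_b,SID_1^*}$.

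The potential leak comes from the fact that $\mathcal{A}$ may know $I^{(1)}=S_{j_0}\cdot H_p(P_{j_0}\|SID_0^*)$ and other key images of $P_{j_b}$ under other scopes. The domain separation $SID_0^*\neq SID_1^*$ makes $H_p(P_{j_b}\|SID_1^*)$ an independent random point, so those other key images do not help. The restriction that $(P_{j_0},SID_1^*)$ and $(P_{j_1},SID_1^*)$ were never queried to $O_s$ guarantees that $\mathcal{A}$ has never seen the key image it must distinguish.

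The reduction works as follows. Given $(G,X,Y,Z)$, $\mathcal{B}$ guesses which registry key is $P_{j_b}$ and sets that key to $X$; this guess costs a factor $K$. It programs $H_p(P_{j_b}\|SID_1^*)=Y$ and $I^{(2)}_{scope}=Z$. Signing queries for $X$ under other scopes still work, because $\mathcal{B}$ knows the corresponding $t$ values and can compute $t\cdot X$. To cover both challenge keys simultaneously, we apply random self-reducibility of DDH to produce independent instances for $P_{j_0}$ and $P_{j_1}$. This loses a factor of at most 2.

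In the final game, $I^{(2)}_{scope}$, $C_1^{(2)}$ and $C_2^{(2)}$ are all uniform, and the responses are simulated. Hence $\sigma_2$ is independent of $b$ and $\Pr[b'=b]=1/2$. Collecting the hops gives
\[Adv^{CSU}_{\mathcal{A}}\le 2K\cdot Adv^{DDH}+Adv^{DDH}+\frac{q_H(q_{O_s}+2)}{q}\approx 0.\]

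The step I expect to be the main obstacle is the third hop. The difficulty is to keep the oracle simulation consistent for the embedded key $X$ across other scopes, in particular for $SID_0^*$ when $j_b=j_0$ and $\sigma_1$ must also be produced. The fix I would try first is to generate $\sigma_1$ with the programmed point $H_p(X\|SID_0^*)=t\cdot G$. Then $I^{(1)}=t\cdot X$ is computable without $S_{j_0}$, so $\sigma_1$ can be simulated consistently.
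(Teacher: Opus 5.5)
Your proof is correct in outline, but it takes a different route from the paper's.

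\textbf{What the paper does.} The paper gives one direct reduction. It sets $P_u=a\cdot G$ (unknown) and $P_v=x\cdot G$ (known), and aborts unless $\mathcal{A}$ picks exactly $(P_{j_0}^*,P_{j_1}^*)=(P_u,P_v)$. It programs $H_p(P_u\|SID_1^*)=b\cdot G$ and $H_p(P_v\|SID_1^*)=x^{-1}Z$, uses $Z$ as the key image of $\sigma_2$, back-patches $H$ for everything else, and reads $\mathcal{A}$'s guess directly as the DDH bit. This is short and uses a single DDH instance, at a loss of about $K^2$. It has two weaknesses:
\begin{itemize}
\item It never treats $(C_1,C_2)$ in $\sigma_2$, although $C_2$ encrypts $P_{j_b}$ and so depends on $b$.
\item Since $Z=x\cdot H_p(P_v\|SID_1^*)$ holds in both DDH branches, the real-tuple branch is not the $b=0$ game. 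It is a hybrid in which the key image is valid for both candidates, and separating it from the $b=0$ game is itself a DDH problem the paper does not address.
\end{itemize}

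\textbf{What your route buys.} Your game sequence avoids both problems. Hop 1 makes all responses signer-independent. Because it comes first, Hop 2 can randomize the ElGamal pair without needing $r_{dean}$. Hop 3 compares a genuine key image of one embedded key against a random point. Each reduction branch is therefore a faithful game, and the final game is information-theoretically independent of $b$, with a guessing loss of about $K$ rather than $K^2$. Your route also makes explicit that $\sigma_1$ is just another signing-oracle answer. Cross-scope unlinkability is then simply anonymity of $\sigma_2$ under the $SID_1^*$ query restriction.

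\textbf{Two points to tighten.}
\begin{itemize}
\item In Hop 3, $\mathcal{B}$ must decide whether to answer $H_p(X\|SID)$ with $Y$ or with $t\cdot G$ at the first query on that point. That query may come before $\mathcal{A}$ announces $SID_1^*$. You therefore also have to guess $SID_1^*$ among the $P$ authorized scopes, an extra factor $P$; the paper's proof has the same omission. Your random self-reducibility step, by contrast, is unnecessary: $\mathcal{B}$ chooses $b$ itself and only needs $P_{j_b}^*=X$.
\item Your claim after Hop 1 that $\sigma_2$ is produced by the same simulator presumes $P_{j_0}^*,P_{j_1}^*\in L^*$. The statement only requires them to lie in $\mathbb L$. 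If $P_{j_1}^*\notin L^*$, then $Sign$ returns $\bot$ exactly when $b=1$, which gives a trivial distinguisher. You should state the membership condition of Lemma 2 explicitly as a hypothesis.
\end{itemize}
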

We prove in \ref{app:proof_lemma_5} that if DSLRS is linkable cross-scope, a simulator $\mathcal{B}$ can utilize an adversary $\mathcal{A}$ with such linking capability to solve an instance of DDH, which is assumed hard.
\begin{lemma}
\textbf{Accountability:} The DSLRS scheme satisfies accountability if for any PPT adversary $\mathcal{A}$:

\[
Adv_{\mathcal{A}}^{ACC} = \Pr \left[
\begin{array}{c}
    PP, \{S_i\}_{i=1}^{K}, \{S_{net-j}\}_{j=1}^{N} \leftarrow{} \text{Setup}(1^{\lambda}, N,K,P); \\
    (m^*,SID^*,L^*,\sigma^*) \leftarrow{} \mathcal{A}^{O_s}(PP,S_s); \\

    \left(
    \begin{array}{c}
    (Verify(m^*,\sigma^*,PP) = 1) \land (Deanonymize(\sigma^*,\{D_j\}_{j-1}^k)=P_{fake}^*) \\
    \land (P_{fake}^* \neq P_s)
    \end{array}
    \right)
\end{array}
\right] \approx 0
\]
\end{lemma}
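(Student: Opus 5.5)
The plan is to show that any verifying signature carries a valid proof that $(C_1, C_2 - P_i)$ is an ElGamal encryption of the identity under $P_{net}$ for the ring member $P_i$ whose key was used. Honest threshold decryption then returns exactly that $P_i$.

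The accountability property is reduced to the special soundness of the OR-proof embedded in the challenge loop, together with correctness of Shamir/Lagrange reconstruction under the threat model. We interpret $P_s$ in the game as the member of $L^*$ whose witness closes the loop. This is the only key the adversary holds for the ring: it is given $S_s$, while all other keys of $L^*$ stay honest.

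First, Lagrange step. By the threat model at most $c<k-1$ nodes are corrupt, and decryption shares can be checked against the public share commitments produced by $NetKeyGen$ (Feldman/Pedersen DKG). Hence the $k$ shares used satisfy $D_j=S_{net-j}\cdot C_1$. Lagrange interpolation gives $V=\sum_j\lambda_j D_j=S_{net}\cdot C_1$, so $Deanonymize$ outputs $C_2-S_{net}C_1$ deterministically. This part is the routine computation already used in Lemma 1.

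Second, soundness of the loop. Given a valid $\sigma^*$, we run the forking lemma for ring signatures \cite{forking-lemma} exactly as in the proof of Lemma 3. Replaying $\mathcal{A}$ with a reprogrammed $H$ at the query that closes the loop yields two valid signatures with the same commitments $(L_i,R_i,A_i,B_i)$ at some index $i$ but with distinct challenges $ch_i\neq ch_i'$. From the $A_i,B_i$ equations,
\[
(z_i-z_i')\cdot G=(ch_i-ch_i')\cdot C_1,\qquad (z_i-z_i')\cdot P_{net}=(ch_i-ch_i')\cdot(C_2-P_i).
\]
We extract $\rho=(z_i-z_i')/(ch_i-ch_i')$ with $C_1=\rho G$ and $C_2-P_i=\rho P_{net}$. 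Hence $C_2-S_{net}C_1=P_i$. The $L_i,R_i$ equations at the same forked index give $x$-differences extracting $S_i$ with $P_i=S_i G$ and $I_{scope}=S_i H_p(P_i\|SID)$, consistent with Corollary 2. So the deanonymized key is $P_i$, and $\mathcal{A}$ knows $S_i$.

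Third, case split. Suppose $P_{fake}^*=P_i\neq P_s$. Then $\mathcal{A}$ has produced a signature whose forked index belongs to an honest ring member whose secret key it was never given. In that case we build $\mathcal{B}$ that embeds an ECDLP instance as $P_i$ (guessing the index, losing a factor $n$ or $K$) and simulates $O_s$ by programming $H$ as in Lemma 3. The extraction above then outputs $S_i$, contradicting ECDLP. Therefore $Adv^{ACC}_{\mathcal{A}}$ is bounded by the forking-lemma loss times $Adv^{ECDLP}$ plus $q_H/q$-type collision terms, which is negligible.

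The main obstacle is the second step. We must ensure that the same forked index yields both the key extraction and the $\rho$ extraction. The closing responses $x_i$ and $z_i$ are bound by a single challenge, so one fork gives both. The subtle point is that an adversary could in principle close the $L/R$ part at one index and the $A/B$ part at another. This is ruled out because each $ch_{i+1}$ hashes all four commitments of index $i$ together, so the position where the loop "closes" is unique. I would argue this via the standard ring-closing argument: with overwhelming probability, exactly one hash query completes the loop.
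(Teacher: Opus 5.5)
\paragraph{Comparison with the paper's route.} Your route differs from the paper's. The paper writes $C_2=P_{fake}^*+r_{dean}\cdot P_{net}$. It then substitutes the \emph{honest} response $z_s=r_z+ch_s\cdot r_{dean}$ into the reconstruction of $B'_s$ at the position of $P_s$ and requires $B'_s=r_z\cdot P_{net}$, which gives $ch_s\cdot(P_{fake}^*-P_s)=\mathcal{O}$. This is short, but it only rules out an adversary who follows the signing equations at index $s$. It does not handle adversarially chosen $(A_s,B_s)$, nor a loop closed at another position. For the latter it appeals informally to EUF-CMA, which does not literally apply because $\mathcal{A}$ holds $S_s$.

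Your argument covers exactly those cases. You fork at the critical query and extract both $\rho$ and $S_\pi$ from the same fork, since the four commitments share one challenge. You conclude that $Deanonymize$ returns $P_\pi$ and reduce $P_\pi\neq P_s$ to ECDLP. The price is forking-lemma machinery and a reduction loss. The $\rho$ part actually needs no rewinding: the Chaum--Pedersen component is statistically sound. For fixed $(A_\pi,B_\pi)$ with $C_2-P_\pi\neq S_{net}\cdot C_1$, at most one value of $ch_\pi$ admits a valid $z_\pi$. So the fork is only needed to extract $S_\pi$.

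\paragraph{The step that fails.} Your third step assumes the loop-closing challenge of $\sigma^*$ came from a fresh $H$-query made by $\mathcal{A}$. The statement has no freshness condition, unlike Lemma 3 and Corollary 3. So $\mathcal{A}$ can query $O_s(m,P_j,L,SID,PP)$ for an honest $P_j\neq P_s$ and output the answer. That signature verifies and deanonymizes to $P_j$, so $\mathcal{A}$ wins with probability $1$. In your reduction, that loop was closed by $\mathcal{B}$ itself (by programming $H$ in its $O_s$ simulation), not by an $H$-query of $\mathcal{A}$. Hence there is nothing to fork and no $S_j$ to extract.

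Neither your proof nor the paper's can establish the lemma as literally stated. You must add a condition such as ``$(m^*,SID^*,L^*)$ not queried to $O_s$''. With it, no hash point on the verification loop of $\sigma^*$ was programmed by $\mathcal{B}$, and your fork goes through.

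\paragraph{A smaller error in your first step.} Decryption shares cannot be checked against Feldman/Pedersen commitments $S_{net-j}\cdot G$. Deciding whether $(G,S_{net-j}\cdot G,C_1,D_j)$ is a DH tuple is a DDH instance, and the scheme contains no Chaum--Pedersen proof of correct decryption. In this game, simply take the $D_j$ as honestly computed, as the paper does in Lemma 1.
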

In \ref{app:proof_lemma_6}, we prove that a fake identity $P_{fake}$ cannot be used in $C2$ to generate a valid signature. Since DSLRS is EUF-CMA (Lemma 3), producing a valid signature requires the deanonymization component $C_2$ to encapsulate the public key of the signer $P_s$.
\begin{corollary}
\textbf{Non-frameability:} The DSLRS scheme satisfies non-frameability if for any PPT adversary $\mathcal{A}$:
\[
Adv_{\mathcal{A}}^{FRAME} = \Pr \left[
\begin{array}{c}
    PP, \{S_i\}_{i=1}^{K}, \{S_{net-j}\}_{j=1}^{N} \leftarrow{} \text{Setup}(1^{\lambda}, N,K,P); \\
    (m^*,SID^*,L^*,\sigma^*,P_v^*) \leftarrow{} \mathcal{A}^{O_s,H,H_p}(PP,S_s); \\
    \left(
    \begin{array}{c}
    (P_v^* \in L^*) \land (P_s \in L^*) \land (P_s \neq P_v^*) \land (Verify(m^*,\sigma^*,PP) = 1) \\
    \land (Deanonymize(\sigma^*,\{D_j\}_{j-1}^k)=P_v^*)\\
    \land (m^*,P_v^*,SID^*) \text{ not queried to } O_s
    \end{array}

    \right)
\end{array}
\right] \approx 0
\]
\end{corollary}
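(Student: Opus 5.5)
We derive non-frameability from the accountability lemma (Lemma 6), with the unforgeability machinery of Lemma 3 and Corollary 2 doing the work. Suppose $\mathcal{A}$, holding $S_s$ and oracle access to $O_s,H,H_p$, outputs $(m^*,SID^*,L^*,\sigma^*,P_v^*)$ satisfying the winning conditions. We split on whether the components $(C_1,C_2)$ of $\sigma^*$ encrypt $P_v^*$ honestly or not. Correctness of threshold ElGamal decryption (Lemma 1) gives $Deanonymize(\sigma^*)=C_2-S_{net}C_1$. So the winning condition means $C_2 = P_v^* + S_{net}C_1$. Writing $C_1=r^*G$, this is exactly an ElGamal encryption of $P_v^*$ with randomness $r^*$.

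\emph{First case: $\mathcal{A}$ signs with its own key while the ciphertext frames $P_v^*$.} This is an instance of the accountability game: the signature verifies, but $Deanonymize$ outputs $P_{fake}^*=P_v^*\neq P_s$. Lemma 6 bounds this probability by a negligible quantity. For a self-contained argument, we would rewind with the ring forking lemma, as in Lemma 3. Each index $i$ of the loop satisfies the two relations $A_i=z_iG-ch_iC_1$ and $B_i=z_iP_{net}-ch_i(C_2-P_i)$. Two forks at the closing index $\pi$ with distinct challenges $ch_\pi\neq ch'_\pi$ give two representations of $A_\pi$ and of $B_\pi$. Solving them yields a scalar $\rho=(z_\pi-z'_\pi)/(ch_\pi-ch'_\pi)$ with $C_1=\rho G$ and $C_2-P_\pi=\rho P_{net}$. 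This forces $C_2-P_\pi=S_{net}C_1$, hence $P_\pi=P_v^*$.

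The same fork also gives $x$-relations: $P_\pi=\frac{x_\pi-x'_\pi}{ch'_\pi-ch_\pi}G$. By Corollary 2, this extracted scalar is also the discrete log used in $I_{scope}$. So the forger knows the secret key $S_v$ of $P_v^*$, which contradicts the assumption that $\mathcal{A}$ only holds $S_s$.

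\emph{Second case: the extracted index is $\pi=v$.} Here we embed an ECDLP challenge $Y$ as $P_v$ in $\mathbb L$. Keys for all other users are generated honestly, including $S_s$, which is handed to $\mathcal{A}$. The simulator $\mathcal{B}$ answers $O_s$ queries for $P_v$ by programming $H$, as in the HVZK simulation of Lemma 2. It picks all $x_i,z_i,ch_1$ at random and sets $I_{scope}=y\cdot Y$, where $H_p(P_v\|SID)$ is programmed as $yG$ for a known $y$. It sets $C_1=r G$ and $C_2=P_v+rP_{net}$, then back-patches the hash values. The freshness condition, that $(m^*,P_v^*,SID^*)$ was not queried to $O_s$, ensures the forgery's closing hash query was not programmed by $\mathcal{B}$. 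The forking lemma then applies, and the extraction above yields $\log_G Y$.

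The main obstacle is to show that the extracted closing index is forced to be $v$ rather than $s$. Two facts are needed: knowing $S_s$ only lets the loop close at index $s$, and closing at $s$ requires $C_2-P_s=\rho P_{net}$ with $C_1=\rho G$. Since $P_{net}$ is fixed by the honest DKG, any index closing the loop must equal the decrypted key; by hypothesis that key is $P_v^*\neq P_s$, so closing at $s$ is impossible. The remaining subtlety is the case $\rho$ unknown to the adversary. It is excluded because the fork extracts $\rho$ from $(z_\pi,z'_\pi)$. The honest-majority assumption $c<k-1$ guarantees the adversary cannot alter $P_{net}$ or the decryption shares $D_j$. Combining the two cases gives $Adv^{FRAME}_{\mathcal{A}}\le$ the forking-lemma loss times $Adv^{ECDLP}$, which is negligible.
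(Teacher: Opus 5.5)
Your argument is correct in substance, but it takes a genuinely different route from the paper. The paper's proof is one line: it sets $P_{fake}^*=P_v^*$ and invokes Lemma 6. Lemma 6's proof only checks that the honest response $z_s=r_z+ch_s\cdot r_{dean}$ fails the $B_s$ equation when $C_2$ encrypts a key other than $P_s$.

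You instead prove the required soundness directly. Forking at the closing index $\pi$ gives two accepting transcripts with $ch_\pi\neq ch'_\pi$. From these you extract $S_\pi$ (from the $L$-relation) and $\rho$ with $C_1=\rho G$ and $C_2-P_\pi=\rho P_{net}$ (from the $A$/$B$-relations). This forces the decrypted key to equal the ring member whose secret is extracted. You then embed an ECDLP instance at $P_v$, simulate $O_s$ by programming $H$ and $H_p$, and use the freshness condition to keep the fork point off programmed inputs, which completes the reduction.

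Your route's advantage is independence from the weak points of Lemma 6. That lemma's game has no freshness condition, so an adversary could simply output an $O_s$ signature of another user. Its algebra also presupposes that the adversary closes the loop at index $s$ with honestly formed responses, which says nothing about deviating strategies. The paper's route buys only brevity.

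Some tidying is needed:
\begin{itemize}
\item Your two cases are really one argument: the fork shows the closing index is the decrypted key, and then ECDLP is embedded at $P_v$. The first case ends in an informal contradiction that only the second makes rigorous.
\item The reduction must guess the framed index, which loses a factor $1/(K-1)$.
\item Corollary 2 is unnecessary. The $R$-relation of the same fork gives the key-image statement directly, and you only need $P_\pi=S_\pi G$ anyway.
\item Honest majority does not prevent tampering with the $D_j$. A threshold only stops reconstruction of $S_{net}$. One corrupted node in the decrypting set can submit $D_j+\Delta$ and shift the output by an arbitrary offset $-\lambda_j\Delta$, because $Deanonymize$ does not verify shares. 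State instead that the game uses honestly computed shares, or add share-validity proofs.
\end{itemize}
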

From Lemma 6, corollary 3 is easily proven, since framing someone requires setting $C_2$ to the victim's public key $P_v$, which violates accountability. A full proof of the corollary is provided in \ref{app:proof_lemma_6}.

\section{Blockchain-based Consent Management using DSLRS signatures}
\label{clinconnet2}
We leverage a consortium blockchain $B$ (Hyperledger Fabric) maintained by $N$ \textbf{research organizations} ($RO$s) hereby referred to as "decentralized deanonymization network". Application of DKG on a blockchain \cite{B-DKG} facilitates the instantiation of the DSLRS signature scheme. We suppose that there is an honest majority of $RO$s where fewer than $k$ nodes are corrupted, $k$ being the DSLRS deanonymization threshold.

\subsection{DSLRS blockchain instantiation}
\label{setup}
The initialization of the signature system proceeds as follows:

\begin{enumerate}
    \item \textbf{Parameter Agreement:} The network nodes agree upon and immutably publish on the ledger the public parameters $PP$. \\
    $PP =\{\mathbb G, G, q, H, H_p,\{SID_i\}_{i=1}^P, n_{min} \}$. Each $SID_i$ uniquely identifies a research project. $SID_1$ is reserved for key registration.

    \item \textbf{Network Key Generation:} the $N$ nodes run $NetKeyGen$. $P_{net}$ and public indices of the nodes $\{\omega_j\}_{j=1}^N$ are published on the ledger and appended to $PP$. Each one of the $N$ nodes securely retains an individual secret key share $S_{net-j}$.
    
    \item \textbf{Smart Contracts Deployment:} Network nodes agree on publishing the smart contract "consent management contract" that uses DSLRS scheme to provide consent management functions as described in \ref{dcm}. An additional smart contract is deployed to enable \textbf{user registration} which permits users to publish their public keys.
    
    \item \textbf{Decentralized Public Key Registration and Proof of Possession (PoP):} Users (patients/participants) generate key pairs $(S_i,P_i)$ off-chain. To register $P_i$ to the global registry $\mathbb L$, they submit a Schnorr-based NIZK Proof of possession $\pi_i = SchnorrSign(S_i,m_i)$ made over $m_i = H(P_i||SID_1)$ to the user registration contract. The contract verifies that $P_i \in \mathbb G \land P_i \neq \mathcal{O} \land P_i \notin \mathbb L$ and that $SchnorrVerify(P_i,m_i,\pi_i) = 1$ before publishing $P_i$ on the blockchain.
    \begin{lemma}
    \textbf{Rogue Key Resistance:} A PPT adversary $\mathcal{A}$ has a negligible probability in registering a rogue key $P_{rogue}$.
    \end{lemma}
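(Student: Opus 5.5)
Our approach is to read ``rogue key'' as a public key $P_{rogue}$ accepted by the user registration contract for which $\mathcal{A}$ does not know the discrete logarithm $S_{rogue}$. A typical example is $P_{rogue} = X - \sum P_i$, built from honest keys to cancel them in some aggregate relation. We would show that any PPT $\mathcal{A}$ getting such a key accepted with non-negligible probability $\epsilon$ yields a simulator $\mathcal{B}$ that solves ECDLP. The tool is the forking lemma applied to the Schnorr proof of possession $\pi_i = SchnorrSign(S_i, m_i)$ with $m_i = H(P_i\|SID_1)$, with $H$ modelled as a random oracle as in Section \ref{preliminaries}.

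First we dispose of the syntactic checks. The contract rejects $P_i \notin \mathbb G$, $P_i = \mathcal{O}$ and any $P_i$ already in $\mathbb L$. This excludes small-subgroup and identity-point attacks and replay of an honest user's registration, including its proof. So a successful $\mathcal{A}$ must output a fresh pair $(P_{rogue}, \pi)$ with $\pi = (c, x)$ verifying $c = H(P_{rogue}, m, x\cdot G + c\cdot P_{rogue})$. We would fix the Schnorr challenge hash to bind the public key, and note that $m_i$ already binds $P_i$ and the registration scope $SID_1$. Therefore no proof produced by or for an honest key can be reused for $P_{rogue}$.

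Next, $\mathcal{B}$ receives an ECDLP instance $Y = y\cdot G$. It embeds $Y$ into the honest users' keys with independent random rerandomisations $P_i = a_i Y$. It simulates the honest users' own PoPs by programming $H$ (the standard HVZK simulation of Schnorr), and answers all other $H$ and $H_p$ queries with lazily sampled random values. We then run the forking lemma \cite{forking-lemma} on the oracle query that fixes $c$ for the accepted $\pi$. With probability roughly $\epsilon^2/q_H$, $\mathcal{B}$ obtains two accepting transcripts $(R, c, x)$ and $(R, c', x')$ with $c \neq c'$ for the same $P_{rogue}$. Extraction then gives $S_{rogue} = (x - x')/(c' - c) \bmod q$.

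Finally, knowledge of $S_{rogue}$ contradicts the rogue-key assumption: $\mathcal{A}$, run as a subroutine, effectively knows a discrete log for the key it registered. If $P_{rogue}$ was formed as a linear combination involving honest keys, $P_{rogue} = \alpha Y + \beta G$ with $\alpha \neq 0$ known to $\mathcal{B}$ from its own view of how the honest keys were embedded, then $\mathcal{B}$ recovers $y = (S_{rogue} - \beta)/\alpha$. This gives $\epsilon \le \sqrt{q_H \cdot Adv^{ECDLP}_{\mathcal{B}}} + \mathrm{negl}$.

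We expect the main obstacle to be this last step: making precise what ``rogue'' means so that the extracted $S_{rogue}$ actually yields an ECDLP solution, rather than just a key $\mathcal{A}$ legitimately knows. We would first try defining rogue as ``$\mathcal{A}$ cannot output $S_{rogue}$'', so that extraction directly contradicts the definition. If that is insufficient, we would fall back to the embedding argument above, handling the case $\alpha = 0$ as a non-rogue key.
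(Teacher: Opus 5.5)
Your proposal follows the same route as the paper's proof: the Schnorr proof of possession forces $\mathcal{A}$ to know $S_{rogue}$, and for a key built from honest keys (the paper takes $P_{rogue} = P_1 - P_2$) that knowledge contradicts ECDLP. Your forking-lemma extraction and the embedding $P_i = a_i Y$ make rigorous what the paper only asserts by appealing loosely to Schnorr EUF-CMA. The caveat you anticipate is real but does not affect the paper's setting: $\mathcal{B}$ knows $(\alpha,\beta)$ only when the winning condition fixes the form of $P_{rogue}$, as the paper's $P_1 - P_2$ does (giving $\alpha = a_1 - a_2$, $\beta = 0$), whereas for $X - \sum P_i$ with adversarially chosen $X$ you would also need $\mathcal{A}$ to output the discrete logarithm of $X$.
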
 Proof in \ref{app:proof_lemma_7}.
\end{enumerate}
This instantiation allows users to dynamically update and add public keys to $\mathbb L$ global registry. Once $K \geq n_{min}$, we can start our decentralized consent management system.

\subsection{Dynamic Consent Management}
\label{dcm}
We assume consent agreements are negotiated off-chain. Participants generate \textbf{consent proofs} $m_i$, defined as hashes of the concluded consent agreements. The \textbf{Consent Management Contract} provides a set of functions $\{Publish, Revoke, Deanonymize\}$ for participants. A consent proof on-chain has three states: $VALID$, $REVOKED$ and $REVEALED$.

A registered participant samples $L$ a ring of $n-1$ public keys from $\mathbb L$ and appends their registered public $P_s$ to it before executing $Sign(m,S_s,L,SID,PP)$ to generate a DSLRS signature $\sigma$. The participant submits $(m,\sigma)$ to the $Publish$ function.
The $Publish$ function executes $Verify(m,\sigma,PP)$ and if it passes, it writes $(m,\sigma)$ with the status $VALID$.

To revoke $m$, a participant executes $Sign('REV',S_s,L,SID,PP)$ to get $\sigma'$. They submit $('REV',\sigma')$ to $Revoke$. $Revoke$ runs $Link(\sigma,m,\sigma','REV')$ and if it passes, it changes the status of $(m,\sigma)$ to $REVOKED$ and writes the new $('REV',\sigma')$ as $VALID$. If a user wishes to instead update their consent with a new consent agreement $m'$, they submit $(m',\sigma')$ to $Revoke$ which effectively revokes and replaces the old consent.

Upon legal or clinical demand, $k$ nodes submit their computed transient decryption shares $\{D_j = S_{net-j} \cdot C_1\}_{j=1}^k$ to the consent management contract. It invokes $Deanonymize$ on  $(m,\sigma)$, transitions its state to $REVEALED$ and publishes the extracted signer's identity $P_s$.
Coupling this system with an external certification authority, $P_s$ is easily linkable to the legal identity of the patient.

\section{Conclusions}
\label{concl}
The DSLRS scheme solves a literature gap by providing a signature scheme with scoped linkability and accountability features. By embedding cryptographic commitments computed using ElGamal encrypted components alongside elements like dynamic key images that change from scope to scope and a network public key resulting from a DKG, the DSLRS signature challenges natively provide decentralized accountability without needing a separate commitment or relying on a trusted opener. Formal security reductions prove that under the ECDLP and DDH assumptions in the ROM model, DSLRS scheme satisfies perfect correctness, signer indistinguishability, EUF-CMA, scoped-linkability, cross-scope unlinkability and accountability. The proposed blockchain instantiation confirms practical use of our proposed system, especially within the proposed use-case which is consent management in clinical trials.

\section*{Data availability}
We provide the implemented DSLRS algorithms and the performance results in open source \url{https://github.com/montassar-isbored/DSLRS-simple}.
Google Gemini Pro 3 (AI model) was used to help generate the code and the benchmarking scripts.

\section*{Funding and Acknowledgments}
This work benefited from State aid managed by the
Agence Nationale de la Recherche (ANR) under the France
2030 programme, reference ANR-22-PESN-0006 (Project
TRACIA). It is also partly supported by the Chair Values and Policies of Personal Information (VPIP), Institut Mines-Telecom,  France, and International Alliance for
Strengthening Cybersecurity and Privacy in Healthcare (CybAlliance, Project no. 337316).

\bibliographystyle{unsrt}

\appendix

\section{Formal proofs}
\label{app:formal_proofs}

\subsection{Proof of Lemma 1 (Perfect correctness)}
\label{app:proof_lemma_1}
\textit{Proof for Verify:}  An honest signer with index $s$ in ring $L$ produces $x_s,z_s \in Z_q$ such that the initial commitments can be successfully reconstructed during verification.
$x_s,z_z$ are computed by the signer as $x_s = r - ch_s \cdot S_s \pmod q$ and $z_s = r_z + ch_s \cdot r_{dean} \pmod q$. The signer commitments during verification are:
\begin{itemize}
    \item $L_s' = x_s \cdot G + ch'_s \cdot P_s = (r - ch_s \cdot S_s) \cdot G + ch'_s \cdot (S_s \cdot G) = r \cdot G = L_s$
    \item $R_s' = x_s \cdot H_p(P_s || SID) + ch'_s \cdot I_{scope} = (r - ch_s \cdot S_s) \cdot H_p(P_s || SID) + ch'_s \cdot (S_s \cdot H_p(P_s || SID)) = r \cdot H_p(P_s || SID) = R_s$
    \item $A_s' = z_s \cdot G - ch'_s \cdot C_1 = (r_z + ch_s \cdot r_{dean}) \cdot G - ch_s \cdot (r_{dean} \cdot G) = r_z \cdot G = A_s$
    \item $B_s' = z_s \cdot P_{net} - ch'_s \cdot (C_2 - P_s) = (r_z + ch_s \cdot r_{dean}) \cdot P_{net} - ch_s \cdot (P_s + r_{dean} \cdot P_{net} - P_s) = r_z \cdot P_{net} = B_s$
\end{itemize}
Reconstructing these four points guarantees that $ch_{s+1}' == ch_{s+1}$. This successfully closes the circular challenge loop, ensuring that, at any point, $ch'_{i} = ch_{i}$ and $ch'_{1} = ch_{1}$.
\\
\textit{Proof for $Deanonymize$:}
Before the $Deanonymize$ algorithm is executed, $k$ of $N$ deanonymization nodes have to provide their transient decryption share, $\{D_j = S_{net-j}\cdot C_1\}_{j=1}^k$, over $C1$ using their secret key shares. 
The $Deanonymize$ algorithm then computes $\lambda_j$ (cf. Algorithm 4) based on the public indices $\{\omega_j\}_{j=1}^k$ of the $k$ respondent nodes and the following values:  \\
$V = \sum_{j=1}^k \lambda_j \cdot D_j = \sum_{j=1}^k \lambda_j \cdot (S_{net-j}\cdot C_1) = \sum_{j=1}^k (\lambda_j \cdot S_{net-j}) \cdot C_1 = S_{net} \cdot C_1$\\ 
$C_1 = r_{dean} \cdot G \Rightarrow V = S_{net} \cdot (r_{dean} \cdot G) = r_{dean} \cdot (S_{net} \cdot G) = r_{dean} \cdot P_{net} $\\
$C_2 = P_s + r_{dean} \cdot P_{net} \Rightarrow C_2 - V = P_s + r_{dean} \cdot P_{net} - r_{dean} \cdot P_{net} = P_s$\\
Furthermore, \textbf{Lemma 6} proves that the output of $Deanonymize$ if a signature is valid can only be $P_s$.
This satisfies the correctness of the $Deanonymize$ algorithm. \\ 

\subsection{Proof of Lemma 2 (Signer indistinguishability)}
\label{app:proof_lemma_2}
$\{x_i,z_i\}_{i=1, i \neq s}^{n}$ the non-signer scalars are uniformly chosen randoms from $\mathbb Zq$. Their associated commitments $(L_i,R_i,A_i,B_i)$ are all computed using the $(x_i,z_i)$ scalars, so they are uniformly distributed points of $\mathbb G$.

$(L_s,R_s,A_s,B_s)$ the signer's initial commitments are computed directly using the nonces $r,r_z$ that are uniformly and randomly chosen from $\mathbb Z_q$, so they and their corresponding (equal) reconstructed commitments during verification $(L'_s,R'_s,A'_s,B'_s)$ are all uniformly distributed points of $\mathbb G$.

As for $(x_s,z_s)$, they are generated as $x_s = r - ch_s \cdot S_s \pmod q$ and $z_s = r_z + ch_s \cdot r_{dean} \pmod q$. Since the nonces $r,r_z$ and $r_{dean}$ are uniformly distributed in $\mathbb Zq$, the resulting signer responses are statistically indistinguishable from the rest of the responses.

$\Rightarrow \{x_i,z_i\}_{i=1}^{n}$ are uniformly distributed scalars of $\mathbb Z_q$, and $\{(L_i,R_i,A_i,B_i)\}_{i=1}^n$ are uniformly distributed points of $\mathbb G$. $\Rightarrow \mathcal{A}$ has no advantage from the analysis of responses and commitments.
\\
The deanonymization components $(C_1,C_2)$ are standard ElGamal encryption components. They are generated from the nonce $r_{dean}$ uniformly and randomly chosen from $\mathbb Zq$ as: $C_1 = r_{dean} \cdot G$ and $C_2 = P_s + r_{dean} \cdot P_{net}$. Since $P_{net} = S_{net} \cdot G$ and $C_2$ contains the signer's public key masked with $r_{dean} \cdot P_{net}$, for $\mathcal{A}$ to determine if $b=0$ or $b=1$ in $C_2 = P_{jb}^* + (r_{dean} \cdot P_{net})$, they need to tell $(G,S_{net} \cdot G, r_{dean} \cdot G, r_{dean} \cdot S_{net} \cdot G)$ from a uniformly random tuple $(G,S_{net} \cdot G, r_{dean} \cdot G, c \cdot G)$. $\mathcal{A}$ has a negligible advantage in this decision under the DDH assumption. Consequently, the mask $r_{dean} \cdot P_{net}$ is indistinguishable from a uniformly random point in $\mathbb G$. $\Rightarrow \mathcal{A}$ has no advantage from the analysis of $(C_1,C_2)$.

Finally, $\mathcal{A}$ was restricted from querying $O_s$ for $(P_{j_0}^*,SID^*)$ and $(P{j_1}^*,SID^*)$ so the element $I_{scope}$ in $\sigma^*$ offers no advantage to guessing the signer.

$\Rightarrow \boxed{Adv_{\mathcal{A}}^{indist} \approx 0}$

\subsection{Proof of Lemma 3 (Unforgeability - EUF-CMA)}
\label{app:proof_lemma_3}
We prove using the Forking Lemma for ring signatures \cite{forking-lemma} that if $\mathcal{A}$ has a non-negligible advantage $Adv^{EUF-CMA}_{\mathcal{A}}$, a simulator $\mathcal{B}$ can construct an algorithm utilizing $\mathcal{A}$ to solve an instance of the ECDLP with an advantage $Adv^{ECDLP}_{\mathcal{B}}$.

\underline{Simulator setup:} $\mathcal{B}$ is presented with an ECDLP challenge. A random point (public key) $P \in \mathbb G$ and the generator $G$ are provided to $\mathcal{B}$ that aims to compute $S \in \mathbb Z^*_q$ such that $P = S \cdot G$. $\mathcal{B}$ initializes the DSLRS system environment for $\mathcal{A}$:
\begin{itemize}
    \item $\mathcal{B}$ selects $v \in \{1,\cdots,K\}$ uniformly at random.
    \item $\mathcal{B}$ embeds the ECDLP challenge into the global registry $\mathbb L$ by setting the public key at the chosen index $v$ to the challenge point: $P_v = P$. The rest of the $(K-1)$ public keys are generated normally as $(S_i,P_i)$.
    \item $\mathcal{B}$ provides $\mathcal{A}$ with $PP$ which includes $\mathbb L$.
    \item $\mathcal{B}$ simulates the hash functions $H$ and $H_p$ as Random Oracles and in a consistent way. $\mathcal{B}$ also simulates a signing oracle $O_s$ and generates a valid-looking signature for every query $(m,L,P_i,SID)$.
\end{itemize}

\underline{Application of the Forking Lemma:}
$\mathcal{B}$ executes $\mathcal{A}$ twice on a specific random tape. $\mathcal{A}$ is allowed to choose the message $m^*$, the scope $SID^*$ and the ring $L^* \subset \mathbb L$. Suppose that $P_v \in L^*$. $\mathcal{A}$ executes in a time $t$ and makes $q_H$ queries to the random hash oracle $H$ and $q_{O_S}$ queries to the signing oracle $O_s$. $\mathcal{B}$ runs $\mathcal{A}$ twice:
\begin{enumerate}
    \item \textbf{First execution:} $\mathcal{A}$ outputs a valid forgery $\sigma^*$ for the signer index $s$ and $\mathcal{B}$ records all the challenges $\{ch_i\}_{i=1}^n$ outputted by the random oracle $H$. $\mathcal{B}$ also records all the internal generated randoms $\{x_i,z_i\}_{i=1}^n$.
    \item \textbf{Second execution (the fork):} $\mathcal{B}$ restarts $\mathcal{A}$ with the same random tape numbers but rewinds and forks (stops) the simulation at a uniformly random query to the random oracle $H$. Let $i^* \in \{1,\cdots,q_H$\} denote the index of the hash query to the random hash oracle $H$ that determines the signer's index challenge $ch_s$ in the forged signature. At this index, $\mathcal{B}$ provides a new randomly selected and different challenge $ch'_s \neq ch_s$. $\mathcal{A}$ outputs a second valid forgery $\sigma'^*$ using the same signer index $s$. The probability of $\mathcal{B}$ forking at $i^* = s$ is at least $\frac{1}{q_H}$.
\end{enumerate}
Both signatures $\sigma^*$ and $\sigma'^*$ are valid DSLRS signatures and share the same challenges and randoms (commitments $x_i,z_i$) up to the fork - all known to $\mathcal{B}$. Using $Verify$, $\mathcal{B}$ reconstructs at the fork $i^* = s$:
$L_s^{\sigma^*} = x_s \cdot G + ch_s \cdot P_s$ and $L_s^{\sigma'^*} = x'_s \cdot G + ch'_s \cdot P_s$. Equating these: $(x_s - x'_s) \cdot G = (ch'_s - ch_s) \cdot P_s$ and since at the fork $ch'_s \neq ch_s$ we have: $P_s = \frac{x_s - x'_s}{ch'_s - ch_s} \cdot G$. Since $P_s = S_s \cdot G$, $S_s = \frac{x_s - x'_s}{ch'_s - ch_s} \pmod q$ is extracted.
The probability that the  embedded ECDLP challenge index $v$ matches the signer index $s$ chosen by $\mathcal{A}$ for the forgery is $\frac{1}{n}$.

According to the General Forking Lemma, the \textbf{simplified} probability that $\mathcal{B}$ successfully forces $\mathcal{A}$ to produce two related forgeries and extracts the secret key $S = S_v$ is: $Adv^{ECDLP}_{\mathcal{B}} \geq \frac{1}{n} \cdot \frac{1}{q_H} \cdot (Adv^{EUF\_CMA}_{\mathcal{A}})^2$.
\\
Under the ECDLP hardness assumption $Adv^{ECDLP}_{\mathcal{B}} \approx 0$ $ \Rightarrow \frac{1}{n} \cdot \frac{1}{q_H} \cdot (Adv^{EUF\_CMA}_{\mathcal{A}})^2 \approx 0$. Since both $q_H$ and $n$ are bounded integers $\Rightarrow \boxed{Adv^{EUF\_CMA}_{\mathcal{A}} \approx 0}$ \\

\textbf{Proof of corollary 1:} $\mathcal{A}$ submits a forged signature $\sigma^*$ containing an $I_{scope} = S_s \cdot H_p(P_s||SID)$ that belongs to a signer $s$. During verification of $\sigma^*$ using $Verify$, the commitment $R'_s = x_s \cdot H_p(P_s||SID) + ch_s \cdot I_{scope}$ must match the initial commitment made during $Sign$ which is $R_s = r \cdot H_p(P_s||SID)$ (from \textbf{Lemma 1}). This requires $\mathcal{A}$ to input the response $x_s = r - ch_s \cdot S_s \pmod q$. Knowledge of $I_{scope}$ is not enough to compute a valid $x_s$, therefore $\mathcal{A}$ cannot produce a valid signature without extracting $S_s$ from $I_{scope}$, which is negligible under the ECDLP assumption.

\textbf{Proof of corollary 2:} $\mathcal{A}$ forges a signature $\sigma^*$ for a scope $SID^*$ containing a malformed key image $I'_{scope} \neq S_s \cdot B \cdot G$ with $B \cdot G = H_p(P_s||SID^*)$. Let $I'_{scope} = A \cdot G$. $A,B \in \mathbb Z_q^*$.
During $Verify$ and since $\sigma^*$ is valid, the reconstructed commitment $R'_s = x_s \cdot H_p(P_s||SID^*) + ch_s \cdot I'_{scope} = x_s \cdot B \cdot G + ch_s \cdot A \cdot G$ must match the initial $R_s = r \cdot H_p(P_s||SID^*) = r \cdot B \cdot G$. $\mathcal{A}$ must compute $x_s$ to satisfy $x_s \cdot B \cdot G + ch_s \cdot A \cdot G = r \cdot B \cdot G$.
Simultaneously, the commitment $L'_s = x_s \cdot G + ch_s \cdot P_s$ must match the initial commitment $L_s = r \cdot G$. $\mathcal{A}$ must compute $x_s$ to satisfy $x_s \cdot G + ch_s \cdot P_s = r \cdot G$. 
Extracting $x_s$ from both expressions: $x_s = r - ch_s \cdot S_s = r - ch_s \cdot \frac{A}{B}$.
$\Rightarrow A = S_s \cdot B$ which means $I'_{scope} = A \cdot G = S_s \cdot B \cdot G$. This contradicts $I'_{scope} \neq S_s \cdot B \cdot G$ so we prove that a valid DSLRS signature $\sigma$ must have a key image formed as $S_s \cdot H_p(P_s||SID)$.

\subsection{Proof of Lemma 4 (Scoped-linkability)}
\label{app:proof_lemma_4}
Case A. $P_{s1}^* = P_{s2}^*$ and $Link(\sigma_1^*,m_1^*,\sigma_2^*,m_2^*,PP) = 0$: $Verify(m_1^*,\sigma_1^*,PP) = 1$ and $Verify(m_2^*,\sigma_2^*,PP) = 1$ and $Link(\sigma_1^*,m_1^*,\sigma_2^*,m_2^*,PP) = 0$ so $I_{scope}^{\sigma_1^*} \neq I_{scope}^{\sigma_2^*}$.
Since $P_{s1} = P_{s2}$ and the scope $SID^*$ is the same for both signatures, $I_{scope}^{\sigma_1^*} \neq I_{scope}^{\sigma_2^*}$ holds true only if one of the signatures contains a malformed key image, which contradicts \textbf{Corollary 2}. $\Rightarrow \boxed{Adv_{\mathcal{A}}^{SL}[\text{Case A}] \approx 0}$. \\
Case B. $P_{s1}^* \neq P_{s2}^*$ and $Link(\sigma_1^*,m_1^*,\sigma_2^*,m_2^*,PP) = 1$: $Link(\sigma_1^*,m_1^*,\sigma_2^*,m_2^*,PP) = 1$, it follows that $I_{scope}^{\sigma_1^*} = I_{scope}^{\sigma_2^*}$. From \textbf{Corollary 2}: $S_{s1} \cdot H_p(P_{s1}^*||SID^*) = S_{s2} \cdot H_p(P_{s2}^*||SID^*)$. Since $P_{s1}^* \neq P_{s2}^*$ and $H_p$ is modeled as ROM, $H_p(P_{s1}^*||SID^*)$ and $H_p(P_{s2}^*||SID^*)$ are two distinct points in $\mathbb G$.
Finding a collision so that $S_{s1} \cdot H_p(P_{s1}^*||SID^*) = S_{s2} \cdot H_p(P_{s2}^*||SID^*)$ requires $\mathcal{A}$ to find a relationship between two independent points in $\mathbb G$, which is strictly equivalent to solving ECDLP, which is negligible. $\Rightarrow \boxed{Adv_{\mathcal{A}}^{SL}[\text{Case B}] \approx 0}$

Finally $\Rightarrow \boxed{Adv_{\mathcal{A}}^{SL} \approx 0}$

\subsection{Proof of Lemma 5 (Cross-scope unlinkability)}
\label{app:proof_lemma_5}
We prove that if $\mathcal{A}$ has a non-negligible $ADV_{\mathcal{A}}^{CSU}$, a simulator $\mathcal{B}$ can use $\mathcal{A}$ to solve the DDH problem with a non-negligible advantage.
\\
\underline{Simulator setup:} $\mathcal{B}$ is a presented with a DDH challenge tuple $(G,a \cdot G, b \cdot G, Z)$ and must decide if $Z = ab \cdot G$ or $Z = c \cdot G$ (valid tuple or random tuple). Let $P_{j_0}^* = A = a \cdot G$ and $B = b \cdot G$.
$\mathcal{B}$ initializes the DSLRS system environment for $\mathcal{A}$:
\begin{itemize}
    \item $\mathcal{B}$ selects two indices $u,v \in \{1, \cdots,K\}$
    \item $\mathcal{B}$ sets $P_u = A = a \cdot G$; $a \xleftarrow[]{} \mathbb Z_q$ is unknown to $\mathcal{B}$ ($a$ is $S_u$)
    \item $\mathcal{B}$ sets $P_v = x \cdot G$; $x \xleftarrow[]{} \mathbb Z_q$ is known to $\mathcal{B}$ ($x$ is $S_v$)
    \item $\mathcal{B}$ generates the remaining $K-2$ keys normally in $\mathbb L$
    \item $\mathcal{B}$ provides $PP$ to $\mathcal{A}$. During the learning phase, $\mathcal{A}$ chooses $P_{j_0}^*,P_{j_1}^*$ and if $(P_{j_0}^* \neq P_u) \lor (P_{j_1}^* \neq P_v)$; $\mathcal{B}$ aborts the simulation. The probability of $\mathcal{A}$ picking the right key combination is $\frac{2}{K \cdot (K-1)}$; Suppose $\mathcal{A}$ picked $P_{j_0}^* = P_u,P_{j_1}^* = P_v$. The simulation continues\footnote{$\mathcal{B}$ can reduce the number of keys in $\mathbb L$ to $n_{min}$ to maximize the chances of $\mathcal{A}$ picking the needed keys.}.
    \item $\mathcal{B}$ simulates the random oracles $H,H_p$ and programs the following responses:
\begin{itemize}
    \item $H_p(P_u||SID_0^*) = r \cdot G$; $r \xleftarrow[]{R} \mathbb Z_q$ known to $\mathcal{B}$
    \item $H_p(P_u||SID_1^*) = B$
    \item $H_p(P_v||SID_1^*) = x^{-1} \cdot Z$;
\end{itemize}
\end{itemize}
\underline{Signature generation:}
$\mathcal{B}$ generates $\sigma_1$ for $P_u$ in scope $SID_0^*$. $I_{scope}^{\sigma_1}$ is computed as $S_u \cdot H_p(P_u||SID_0^*) = a \cdot r \cdot G = r \cdot A$. Since $\mathcal{B}$ knows $r,A$ it can successfully compute this without knowing $a$. The remaining signature items $(\{L_i,R_i,A_i,B_i,x_i,z_i\}_{i=1}^n,C_1,C_2)$ are faked by back-patching the random oracle $H$ to correctly close the loop. The simulated components are uniformly distributed and indistinguishable from an honest signer output, as proven in \textbf{Lemma 2}.

$\mathcal{B}$ generates $\sigma_2$ for $P_u$ or $P_v$ in scope $SID_1^*$. $\mathcal{B}$ embeds the DDH challenge in the key image $I_{scope}^{\sigma_2} = Z$ and simulates the rest of the signature components as it did for $\sigma_1$.
$\mathcal{B}$ provides $\mathcal{A}$ with $(\sigma_1,\sigma_2)$ for it to guess $b' = 0$ same signer or $b'= 1$ different signers.
\begin{enumerate}
    \item $b'=0$ $\Rightarrow \sigma_2$ is evaluated as a valid signature from $P_u$, the key image is $I_{scope}^{\sigma_2} = S_u \cdot H_p(P_u||SID_1^*) = a \cdot B = ab \cdot G$. If $\mathcal{A}$ outputs $b'=0$, $\mathcal{B}$ outputs 1 indicating it is a valid tuple.
    \item $b'=1$ $\Rightarrow \sigma_2$ is evaluated as a valid signature from $P_v$, the key image is $I_{scope}^{\sigma_2} = S_v \cdot H_p(P_v||SID_1^*) = x \cdot x^{-1} \cdot Z = Z$. If $\mathcal{A}$ outputs $b'=1$, $\mathcal{B}$ outputs 0 indicating it is a random tuple.
\end{enumerate}
We note $Adv_{\mathcal{B}}^{DDH} \geq \frac{2}{K(K-1)} \cdot Adv_{\mathcal{A}}^{CSU}$. Since $Adv_{\mathcal{B}}^{DDH} \approx 0$
 $\Rightarrow \boxed{Adv^{CSU}_{\mathcal{A}} \approx 0}$

\subsection{Proof of Lemma 6 (Accountability)}
\label{app:proof_lemma_6}
Assume that $\mathcal{A}$ generated a valid signature $\sigma^*$ with $(C_1,C_2)$ using a false identity $P_{fake}^*$ where $C_1 = r_{dean} \cdot G$ and $C_2= P_{fake}^* + r_{dean} \cdot P_{net}$. During $Verify$, $B'_s$ is constructed as
$B'_s= z_s \cdot P_{net} - ch_s \cdot (C_2 - P_s)$
$\Rightarrow B'_s= z_s \cdot P_{net} - ch_s \cdot (P_{fake}^* + r_{dean} \cdot P_{net} - P_s)$.\\ Since $z_s = r_z + ch_s \cdot r_{dean} \pmod q$
$\Rightarrow B'_s= (r_z + ch_s \cdot r_{dean}) \cdot P_{net} - ch_s \cdot (P_{fake}^* + r_{dean} \cdot P_{net} - P_s) = r_z \cdot P_{net} + ch_s \cdot r_{dean} \cdot P_{net} - ch_s \cdot P_{fake}^* - ch_s \cdot r_{dean} \cdot P_{net} + ch_s \cdot P_s = r_z \cdot P_{net} - ch_s \cdot (P_{fake}^* - P_s)$.
Since $\sigma^*$ is valid, the reconstructed commitment $B'_s$ matches $B_s = r_z \cdot P_{net}$.\\
$\Rightarrow  r_z \cdot P_{net} - ch_s \cdot (P_{fake}^* - P_s) = r_z \cdot P_{net}$
$\Rightarrow ch_s (P_{fake}^* - P_s) = 0$. $ch_s \neq 0$ $\Rightarrow (P_{fake}^* - P_s) = 0$. $\mathcal{A}$ is strictly forced to encapsulate their $P_s$ in $C_2$ $\Rightarrow \boxed{Adv_{\mathcal{A}}^{ACC} \approx 0}$

\textbf{Proof of corollary 3:} 
Assume $\mathcal{A}$ outputs a forged signature $\sigma^*$ to frame a victim $P_v^* \in L^*$.
Let $P^*_v$ be set as $P_{fake}^*$ defined in \textbf{Lemma 6}. According to lemma 6, $\mathcal{A}$ cannot generate a valid signature where $P^*_{fake} \neq P_s$. $\mathcal{A}$ cannot frame a victim without violating the accountability proven in lemma 6.

$\Rightarrow \boxed{Adv_{\mathcal{A}}^{FRAME} \approx 0}$

\subsection{Proof of Lemma 7 (Rogue key resistance)}
\label{app:proof_lemma_7}
To register a rogue key $P_{rogue} = P_1 - P_2$, $\mathcal{A}$ is required to submit a valid Schnorr PoP $\pi_{rogue}$ on $m_{rogue} = H(P_{rogue}||SID_1)$. This requires $\mathcal{A}$ to know $S_{rogue}$ since Schnorr signatures are EUF-CMA.
By definition, $P_{rogue} = P_1 - P_2 = S_1 \cdot G - S_2 \cdot G = (S_1 - S_2) \cdot G$. Let $S_{rogue}=S_1 - S_2$, it is computationally infeasible for $\mathcal{A}$ to compute $S_1 - S_2$ under the ECDLP hardness assumption.

\end{document}